\theoremstyle{plain}
\newtheorem{theorem}{Theorem}
\newtheorem{lemma}{Lemma}
\newtheorem{proposition}{Proposition}
\theoremstyle{definition}
\newtheorem{definition}{Definition}
\newtheorem{remark}{Remark}
\newcommand{\bM}{{\boldsymbol{M}}}
\newcommand{\bP}{{\boldsymbol P}} 
\newcommand{\bQ}{{\boldsymbol Q}}
\newcommand{\bA}{{\boldsymbol A}}
\newcommand{\bG}{{\boldsymbol G}}
\newcommand{\bX}{\boldsymbol{X}}
\newcommand{\bu}{{\boldsymbol u}}
\newcommand{\bv}{{\boldsymbol v}}
\newcommand{\bc}{{\boldsymbol c}}
\newcommand{\bO}{{\boldsymbol 0}}
\newcommand{\bx}{{\boldsymbol{x}}}
\newcommand{\ft}{\mathbb{F}_2}
\newcommand{\fq}{\mathbb{F}_q}
\newcommand{\entropy}{{\sf H}}
\newcommand{\supp}{{\sf supp}}
\newcommand{\dist}{{\mathsf{d}}}
\newcommand{\rank}{{\mathsf{rank}_q}}
\newcommand{\cC}{{\mathscr{C}}}
\newcommand{\define}{\stackrel{\mbox{\tiny $\triangle$}}{=}}
\newcommand{\nin}{\noindent}
\newcommand{\et}{{\emph{et al.}}}
\newcommand{\vM}{{{\sf var}}(\boldsymbol{M})}
\newcommand{\vP}{{{\sf var}}(\boldsymbol{P})}
\newcommand{\nm}{\mathcal{N}^k({\bM})}
\begin{document}
\pagestyle{empty}

\title{Weakly Secure MDS Codes for Simple Multiple Access Networks}

\author{
   \IEEEauthorblockN{
     Son Hoang Dau\IEEEauthorrefmark{1},
     Wentu Song\IEEEauthorrefmark{2},
     Chau Yuen\IEEEauthorrefmark{3}}
   \IEEEauthorblockA{
     Singapore University of Technology and Design, Singapore\\     
		Emails: $\{$\IEEEauthorrefmark{1}{\it sonhoang\_dau},     
		\IEEEauthorrefmark{2}{\it wentu\_song},
     \IEEEauthorrefmark{3}{\it yuenchau}$\}$@sutd.edu.sg
 }
}
\maketitle

\begin{abstract}
We consider a simple multiple access network (SMAN),
where $k$ sources of unit rates transmit their data to a common sink via $n$ relays.  
Each relay is connected to the sink and 
to certain sources. 
A coding scheme (for the relays) is \emph{weakly secure} if a passive
adversary who eavesdrops on less than $k$ relay-sink links cannot
reconstruct the data from each source. 

We show that there exists a weakly secure maximum distance separable (MDS) coding scheme for the relays if and only if
every subset of $\ell$ relays must be collectively connected to at least $\ell+1$ sources, for all $0 < \ell < k$. Moreover, we prove that this condition can be verified in polynomial time in $n$ and $k$. 
Finally, given a SMAN satisfying the aforementioned condition, 
we provide another polynomial time algorithm to trim the network until it has a sparsest set of source-relay links that still supports a weakly secure MDS coding scheme.      
\end{abstract}

\section{Introduction}
\label{sec:intro}

A simple multiple access network (SMAN) is a two-hop network, 
where some $k$ independent sources transmit their data to a common sink 
via $n$ relays. We use $(n,k)$-SMAN to refer to such network.
An example of a $(6,4)$-SMAN is illustrated in 
Fig.~\ref{fig:SMAN-ex}. 
Simple multiple access networks were studied in the recent work of
Yao {\et}~\cite{YaoHoNita-Rotaru2011} (to model the problem of decentralized distribution of keys from a pool among the wireless nodes),
Halbawi {\et}~\cite{HalbawiHoYaoDuursma2013},
and Dau {\et}~\cite{DauSongDongYuenISIT2013, DauSongYuenISIT14,
DauSongYuen_JSAC2014}. 
The model of SMAN considered in~\cite{HalbawiHoYaoDuursma2013} is more general in the sense that the sources are assumed to have arbitrary rates. However, it was shown in~\cite{DauSongYuenISIT14, DauSongYuen_JSAC2014} that as far as 
the problem of constructing error-correcting codes for the relays is concerned, considering unit-rate
sources is sufficient. Interestingly, the code design problem
for SMAN was also shown in~\cite{DauSongYuenISIT14, DauSongYuen_JSAC2014} to be equivalent to the code design
problem for weakly secure cooperative data exchange~\cite{YanSprintson2013, YanSprintsonZelenko2014}.

Error correction for the general multiple access network was first investigated in the work of Dikaliotis {\et}~\cite{Dikaliotis-etal-2011}. 
The coding schemes derived in ~\cite{Dikaliotis-etal-2011}
are packetized over large fields, which are of sizes at least exponential in the number of sources. While SMAN is a special case of multiple access network~\cite{Dikaliotis-etal-2011}, the authors of~\cite{HalbawiHoYaoDuursma2013, DauSongYuenISIT14,
DauSongYuen_JSAC2014} focused more on designing error-correcting codes over small fields, whose sizes are linear in $n$ and $k$. 
Various new problems on balance and sparsity of the network were also investigated in~\cite{DauSongDongYuenISIT2013,
DauSongYuen_JSAC2014}. 
 
\begin{figure}[th]
\centering
\includegraphics[scale=0.85]{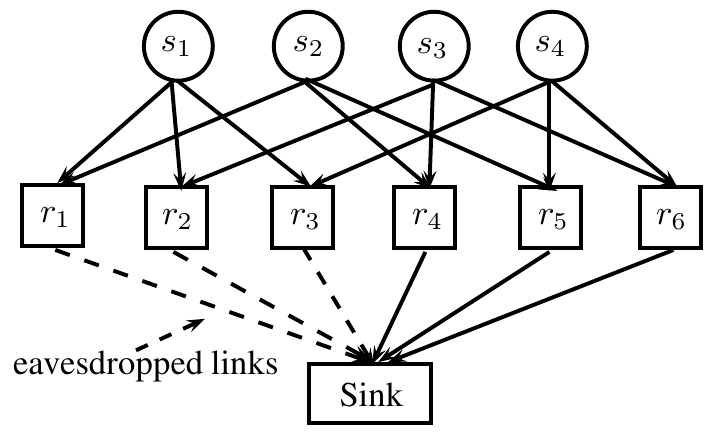}
\caption{An example of a $(6,4)$-SMAN. Three relay-sink links (dashed) 
are eavesdropped. The question is: can we prevent the adversary
from learning about each individual source packet?}
\label{fig:SMAN-ex}
\vspace{-18pt}
\end{figure}

In this paper we study the security aspect of the coding schemes
used for the relays in an $(n,k)$-SMAN. More specifically, we focus on the \emph{weak security} of such coding schemes against a \emph{passive adversary}, which eavesdrops on the relay-sink links.
Suppose that each source transmits a single packet, which is an 
element of some finite field $\fq$, to the sink. 
All source packets are assumed to be independent and randomly distributed over $\fq$. 
The coding scheme for the relays is \emph{weakly secure} if an adversary that
eavesdrops on at most $k-1$ relay-sink links gains \emph{no information}
(in Shannon's sense) about each particular source packet. 
In the context of decentralized key distribution~\cite{YaoHoNita-Rotaru2011}, a wireless node (corresponding to the sink in the SMAN) contacts its neighbors (corresponding to the relays in the SMAN) to retrieve $k$ secret keys $s_i \in \fq$ $(1 \leq i \leq k)$. 
Each of its neighbors possesses some of these $k$ keys and transmits one (coded) packet in $\fq$ to that node. In that scenario, a weakly secure coding scheme for the corresponding SMAN would guarantee that an adversary that eavesdrops on at most $k-1$ transmissions cannot determine explicitly any secret key. Note that Yao {\et}~\cite{YaoHoNita-Rotaru2011} only considered an active adversary who can 
corrupt the transmissions. In this work, we assume the presence of both an active adversary and a passive adversary. 
Note that these two adversaries may be independent of each other. 
In other words, they may attack different sets of links. 

The concept of weak security was first discussed by Yamamoto~\cite{Yamamoto1986} in the context of ramp secret sharing scheme.
After Yamamoto~\cite{Yamamoto1986}, weak security was also discovered by Bhattad and Narayanan~\cite{BhattadNarayanan2005} in a more general context of network coding. 
Weak security is important in practice since it guarantees that no meaningful information is leaked to the adversary, and often requires no additional overhead. For example, suppose that the adversary obtains the coded packet $x_1 + x_2$ where $x_1$ and $x_2$ are packets from the sources $s_1$ and $s_2$, respectively. 
Then the adversary would not be able to determine either $x_1$ or $x_2$, as from its point of view, both $x_1$ and $x_2$ are completely random variables.  

In this work we limit ourselves to maximum distance separable
(MDS) coding schemes (see Section~\ref{sec:pre} for definition). Our main contributions are summarized below. 
\begin{itemize}
	\item We establish a necessary and sufficient condition for the existence of a weakly secure MDS coding scheme for the relays. 
	More specifically, there exists
a weakly secure MDS coding scheme for the relays if and only if
every subset of $\ell$ relays must be collectively connected to at least $\ell+1$ sources, for all $0 < \ell < k$. Moreover, this condition, referred to as the \emph{Weak Security Condition}, can be verified in polynomial time
in $n$ and $k$.
\item Given a SMAN satisfying the Weak Security Condition, 
we provide a polynomial time algorithm to trim the network by removing certain source-relay links until it has the sparsest set of source-relay links that still supports a weakly secure MDS coding scheme.
This algorithm is similar to the algorithm used to find a maximum matching in a bipartite graph that deletes edges of the graph one by one until all remaining edges form a matching. 
\item We also study the so-called \emph{block security}, which is
a generalization of weak security, and characterize the block security level of an arbitrary SMAN.   
\end{itemize}
The first conclusion above describes the additional requirement on 
the source-relay links if a passive adversary is also present. 
Indeed, an MDS code implemented at the relays allows the sink
to tolerate a maximum number of $\lfloor (n-k+1)/2 \rfloor$ corrupted relay/links. Such an MDS code exists if and only if the SMAN
satisfies the MDS Condition~\cite{DauSongDongYuenISIT2013,
DauSongYuenISIT14, DauSongYuen_JSAC2014, YanSprintsonZelenko2014}: every subset of $\ell$ relays must be collectively connected to at least $\ell$ sources, for all $\ell \leq k$. Comparing the MDS Condition and the Weak Security Condition, we conclude that more 
source-relay links are required to defend both an active and
a passive adversary. Hence, a SMAN that survives the most powerful 
active adversary, which can corrupt $\lfloor (n-k+1)/2 \rfloor$ relays/links, may not be weakly secure against a passive adversary.

The paper is organized as follows. Necessary notation and definitions
are provided in Section~~\ref{sec:pre}. The weak security for SMAN
is discussed in Section~\ref{sec:weak_security}. The extension
of weak security to block security is investigated in
Section~\ref{sec:block_security}. 

\section{Preliminaries}
\label{sec:pre}

Let $\fq$ denote the finite field with $q$ elements. 
Let $[n]$ denote the set $\{1,2,\ldots,n\}$. 
For a $k \times n$ matrix $\bM$, for $i \in [k]$ and $j \in [n]$, let
$\bM_i$ and $\bM[j]$ denote the row $i$ and the column $j$ of $\bM$, respectively. 
We define below standard notions from coding theory (for instance, see \cite{MW_S}).
 
The \emph{support} of a vector $\bu = (u_1,\ldots,u_n) \in \fq^n$ is the set 
$\supp(\bu) = \{i \in [n] \colon u_i \neq 0\}$. 
The (Hamming) distance between two vectors $\bu$ and $\bv$ of $\fq^n$
is defined to be
$
\dist(\bu,\bv) = |\{i \in [n] \colon u_i \ne v_i\}|. 
$
A $k$-dimensional subspace $\cC$ of $\fq^n$ is called a linear $[n,k,d]_q$ \emph{(error-correcting) code} over $\fq$ 
if the minimum distance $\dist(\cC)$ between any pair of distinct
vectors in $\cC$ is equal to $d$.  
Sometimes we may use the notation $[n,k]_q$ or just $[n,k]$ for the sake of simplicity. The vectors in $\cC$ are called \emph{codewords}. 
A \emph{generator matrix} $\bG$ of an $[n,k]_q$ code $\cC$ is a $k \times n$ matrix whose rows are linearly independent codewords of $\cC$. Then $\cC = \{\bx \bG \colon \bx \in \fq^k\}$. 
The well-known Singleton bound (\cite[Ch. 1]{MW_S}) states that for any $[n,k,d]_q$
code, it holds that $d \leq n - k + 1$. 
If the equality is attained, the code is called \emph{maximum distance separable} (MDS).  

\vspace{-3pt}
\begin{definition} An $(n,k)$ simple multiple access network 
$((n,k)$-SMAN for short$)$ is a network that consists of
\begin{itemize}
	\item $k$ independent sources $s_1,\ldots,s_k$ of unit rates, one sink, and $n$ relays $r_1,\ldots,r_n$, where $n \geq k$, and
	\item some directed edges of capacity one that connect certain source-relay pairs and one directed
	edge of capacity one that connects each relay to the sink. 
\end{itemize}
\end{definition} \vspace{-3pt}

An $(n,k)$-SMAN can be represented by an \emph{adjacency matrix}
$\bM = (m_{i,j}) \in \ft^{k \times n}$ where $m_{i,j}=1$ if
and only if the source $s_i$ is connected to the relay $r_j$. 

Let $\bX=(X_1, \ldots, X_k)$ be a vector of independent and identically uniformly distributed random variables over $\fq$. 
We assume that the vector of source packets is $\bx = (x_1,\ldots, x_k)$, a realization of $\bX$.  
A linear coding scheme for an $(n,k)$-SMAN is represented by 
a $k \times n$ matrix $\bG = (g_{i,j})$ over $\fq$. The coding rule for the relays is as follows: $r_j$ $(j \in [n])$ 
creates and transmits the coded packet $\bx \bG[j]$ to the sink.
We refer to $\bG$ as the \emph{encoding matrix} of the coding scheme.
Note that $g_{i,j}$ must be zero whenever $m_{i,j} = 0$.  
If $\bG$ generates a linear code that can correct $t$ errors then
the sink can still determine all $k$ source packets under the presence
of at most $t$ erroneous coded packets sent from some $t$ relays.

A coding scheme based on $\bG$ is
\emph{weakly secure} if the conditional entropy \vspace{-3pt}
\[
\entropy\big(X_i \mid \{\bX \bG[j] : j \in E\}\big) = \entropy(X_i),\vspace{-3pt}
\] 
for every $i \in [k]$ and for every subset $\varnothing \neq E \subset [n]$, $|E| < k$. 
In words, a coding scheme is \emph{weakly secure} if an adversary 
that eavesdrops on at most $k-1$ coded packets transmitted on different relay-sink links obtains no information about each particular
source packet. 
Note that we always assume that $\rank(\bG) = k$. Hence, obviously an adversary that eavesdrops on certain $k$ linearly independent coded packets can always retrieve all $k$ source packets.  
 
\section{Weak Security for SMAN}
\label{sec:weak_security}

\subsection{Necessary and Sufficient Condition for Weak Security}

We first derive a necessary and sufficient condition on the links between sources and relays for a SMAN to support 
a weakly secure MDS coding scheme. 

\begin{theorem}
\label{thm:main1}
An $(n,k)$-SMAN supports a weakly secure MDS coding scheme, i.e. 
there exists a weakly secure MDS coding scheme for the relays over
some finite field $\fq$, if and only if every subset of $\ell$ relays must be collectively connected to at least $\ell+1$ sources, for all $0 < \ell < k$. In other words, it requires that
\begin{equation} 
\label{eq:weak_security}
|\cup_{j \in J} \supp(\bM[j])| \geq |J|+1,\
\forall \varnothing \neq J \subset [n],\ |J| < k,
\end{equation} 
where $\bM[j]$ is the $j$th column of the adjacency matrix $\bM$. 
We refer to (\ref{eq:weak_security}) as the Weak Security Condition for SMAN. 
\end{theorem}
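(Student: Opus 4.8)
The plan is to translate both the MDS requirement and the weak-security requirement into statements about the non-vanishing of certain minors of the encoding matrix $\bG$, and then to realize those conditions over a large field by a genericity argument. First I would record the linear-algebraic meaning of weak security. Writing $\bG_E$ for the submatrix of $\bG$ formed by the columns indexed by $E$, the data available to the adversary are the coordinates of $\bX\bG_E$, i.e. the linear functionals on $\bX$ spanned by $\mathrm{col}(\bG_E)$. Since $\bX$ is uniform over $\fq^k$, the symbol $X_i=\langle\bX,\be_i\rangle$ is independent of $\bX\bG_E$ (equivalently $\entropy(X_i\mid\bX\bG_E)=\entropy(X_i)$) if and only if $\be_i\notin\mathrm{col}(\bG_E)$. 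Because $\mathrm{col}(\bG_{E'})\subseteq\mathrm{col}(\bG_E)$ for $E'\subseteq E$ and every nonempty $E$ with $|E|<k$ sits inside a set of size $k-1$, it suffices to test $|E|=k-1$. For an MDS code $\bG_E$ then has full column rank $k-1$, so $\be_i\notin\mathrm{col}(\bG_E)$ is equivalent to the $k\times k$ matrix $[\bG_E\mid\be_i]$ being nonsingular; expanding its determinant along the column $\be_i$, this is exactly the statement that the $(k-1)\times(k-1)$ minor of $\bG$ on rows $[k]\setminus\{i\}$ and columns $E$ is nonzero. Hence ``MDS and weakly secure'' becomes: every maximal ($k\times k$) minor of $\bG$, and every $(k-1)\times(k-1)$ minor obtained by deleting a single row, is nonzero.

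For necessity I would argue by contrapositive: suppose the Weak Security Condition fails for some $\varnothing\neq J\subset[n]$ with $|J|<k$, and put $S=\cup_{j\in J}\supp(\bM[j])$ with $|S|\le|J|$. Every column $\bG[j]$ with $j\in J$ is supported on $S$, so $\mathrm{col}(\bG_J)\subseteq\mathrm{span}\{\be_i:i\in S\}$. On the other hand, MDS forces any set of at most $k$ columns to be linearly independent, so $\dim\mathrm{col}(\bG_J)=|J|$; combined with the containment this gives $|J|\le|S|$, and together with $|S|\le|J|$ we get $|S|=|J|$ and $\mathrm{col}(\bG_J)=\mathrm{span}\{\be_i:i\in S\}$. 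In particular $\be_i\in\mathrm{col}(\bG_J)$ for each $i\in S$, so taking $E=J$ violates weak security. Thus the condition is necessary.

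For sufficiency I would place an independent indeterminate in each free position ($m_{i,j}=1$) and a zero elsewhere, and study each relevant minor as a polynomial in these indeterminates. By the Leibniz expansion such a minor is a signed sum over bijections between its row set and column set, and distinct bijections yield distinct squarefree monomials in distinct variables, so no cancellation occurs: the minor is not identically zero precisely when the bipartite support graph between those rows and columns has a perfect matching. By Hall's theorem this reduces to a neighborhood-size inequality, and the crux is that the Weak Security Condition delivers exactly these inequalities. The maximal minors need $|\cup_{j\in C'}\supp(\bM[j])|\ge|C'|$ for every column subset $C'$, which is the MDS Condition and follows from the Weak Security Condition; the deleted-row minors need $|(\cup_{j\in C'}\supp(\bM[j]))\setminus\{i\}|\ge|C'|$, and here the ``$+1$'' in the Weak Security Condition is precisely what absorbs the loss of the single deleted row $i$.

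Finally I would multiply together these finitely many non-identically-zero minor polynomials; the product is again not identically zero, so by the Schwartz--Zippel lemma (or the Combinatorial Nullstellensatz) there is an assignment over any sufficiently large $\fq$ at which all minors are simultaneously nonzero, producing a weakly secure MDS encoding matrix respecting the support of $\bM$. I expect the main obstacle to be the sufficiency direction, and within it the clean identification of the minor non-vanishing conditions with the Hall inequalities---verifying that deleting a row costs at most one in each neighborhood and is compensated exactly by the ``$+1$''; the information-theoretic reduction and the final genericity step are comparatively routine.
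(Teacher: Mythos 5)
Your proposal is correct and takes essentially the same route as the paper: both reduce ``MDS and weakly secure'' to the simultaneous non-vanishing of all $k\times k$ minors and all delete-one-row $(k-1)\times(k-1)$ minors of $\bG$, then place indeterminates in the support of $\bM$, certify each such minor is generically nonzero via a Hall/matching-type condition (which the ``$+1$'' in the Weak Security Condition supplies), and finish with a product-of-polynomials plus large-field (Schwartz--Zippel type) argument. The differences are only presentational: you prove inline (the span characterization of weak security, the Edmonds/Hall genericity step, the direct contrapositive for necessity) what the paper delegates to its Lemmas~\ref{lem:weak_security}--\ref{lem:var} and cited references.
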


We need a few lemmas for the proof of Theorem~\ref{thm:main1}. 
\begin{lemma}[\cite{VladutNoginTsfasman2007}]
\label{lem:GM_distance}
The $k\times n$ matrix $\bG$ is a generator matrix of 
an $[n,k,d]_q$ error-correcting code if and only if
every $n-d+1$ columns of $\bG$ has rank $k$. 
\end{lemma}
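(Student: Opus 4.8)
The plan is to reduce this statement about the minimum distance of $\cC$ to a statement about left null spaces of column submatrices of $\bG$, using the standard fact that for a linear code the minimum distance equals the minimum Hamming weight of a nonzero codeword. The crucial dictionary is the following. For a message $\bx \in \fq^k$, the codeword $\bx\bG$ has a zero in coordinate $j$ exactly when $\bx\,\bG[j] = 0$; hence, for a set $J \subseteq [n]$, the codeword $\bx\bG$ vanishes on every coordinate in $J$ precisely when $\bx$ lies in the left null space of the $k \times |J|$ submatrix $\bG_J$ formed by the columns of $\bG$ indexed by $J$. Since $\bG_J$ has $k$ rows, this left null space is nontrivial if and only if $\rank(\bG_J) < k$. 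Thus there exists a nonzero codeword vanishing on all coordinates of $J$ if and only if $\rank(\bG_J) < k$; this is the only structural fact I will need.

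With this dictionary in hand I would establish both implications by contraposition. For the forward direction, suppose $\bG$ generates a code of minimum distance $d$ yet some $J$ with $|J| = n-d+1$ has $\rank(\bG_J) < k$. Then there is a nonzero $\bx$ with $\bx\bG_J = \bO$, so the nonzero codeword $\bx\bG$ vanishes on all $n-d+1$ positions of $J$ and therefore has weight at most $n-(n-d+1) = d-1$, contradicting the minimality of $d$. For the converse, assume every set of $n-d+1$ columns has rank $k$ and let $\bx\bG$ be any nonzero codeword; if its weight were at most $d-1$ it would vanish on at least $n-d+1$ coordinates, and selecting any $n-d+1$ of these as $J$ would give $\bx\bG_J = \bO$ with $\bx \neq \bO$, forcing $\rank(\bG_J) < k$, a contradiction. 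Hence every nonzero codeword has weight at least $d$, so $\dist(\cC) \ge d$.

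The one point requiring care, which I expect to be the main (and only) obstacle, is the gap between ``minimum distance at least $d$'' and ``minimum distance exactly $d$'': the rank condition on all $(n-d+1)$-subsets is literally equivalent to $\dist(\cC) \ge d$, and pinning it to exactly $d$ additionally requires the existence of some $(n-d)$-subset of columns of rank strictly less than $k$, which by the same dictionary produces a codeword of weight exactly $d$. In the setting where the lemma is applied the code is MDS with $d = n-k+1$, so $n-d+1 = k$ and the condition reads ``every $k$ columns of $\bG$ are linearly independent''; since the Singleton bound forces $\dist(\cC) \le n-k+1$, the inequality $\dist(\cC) \ge n-k+1$ already yields equality, which dissolves the at-least-versus-exactly subtlety entirely.
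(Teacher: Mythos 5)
Your proof is correct. Note that the paper itself offers no proof of this lemma---it is imported verbatim from the cited reference \cite{VladutNoginTsfasman2007}---so there is no in-paper argument to compare against; your argument is the standard one that the citation points to, via the dictionary ``nonzero codeword vanishing on $J$'' $\Leftrightarrow$ ``nontrivial left null space of $\bG_J$'' $\Leftrightarrow$ ``$\rank(\bG_J)<k$,'' combined with minimum distance equals minimum nonzero weight. You also correctly identify the one genuine imprecision in the statement as written: the rank condition on all $(n-d+1)$-subsets is equivalent only to $\dist(\cC)\ge d$ (a code of strictly larger distance also satisfies it), with exact equality requiring in addition some $(n-d)$-subset of columns of rank less than $k$; and your observation that this subtlety is vacuous in the paper's only use of the lemma---the MDS case $d=n-k+1$, where the Singleton bound caps the distance from above and the condition becomes ``every $k$ columns are linearly independent''---is exactly right.
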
 
\vspace{-3pt}
\begin{lemma}
\label{lem:weak_security}
A coding scheme based on the matrix $\bG$ for an $(n,k)$-SMAN
is weakly secure if and only if every $k-1$ columns of $\bG$
generates an error-correcting code of minimum distance at least two. 
\end{lemma}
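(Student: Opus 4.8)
The plan is to convert the information-theoretic definition of weak security into a linear-algebraic condition on the columns of $\bG$, and then to recognize that condition as a minimum-distance statement.

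First I would isolate the pointwise equivalence: for a fixed $i \in [k]$ and a fixed nonempty $E \seq [n]$, the equality $\entropy(X_i \mid \{\bX\bG[j] : j \in E\}) = \entropy(X_i)$ holds if and only if $\be_i \notin \spn\{\bG[j] : j \in E\}$. The ``only if'' direction (contrapositive) is direct: if $\be_i \in \spn\{\bG[j] : j \in E\}$, say $\be_i = \sum_{j \in E} c_j \bG[j]$ with $c_j \in \fq$, then $X_i = \sum_{j\in E} c_j(\bX\bG[j])$ is a deterministic function of the eavesdropped symbols, so the conditional entropy is $0$, which is strictly less than $\entropy(X_i) > 0$ (as $X_i$ is uniform over $\fq$); hence security fails. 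For the converse, $\be_i \notin \spn\{\bG[j] : j \in E\}$ means, by duality, that there is a vector $\bz \in \fq^k$ with $\bz\bG[j] = 0$ for every $j \in E$ yet $\bz\be_i \neq 0$. Replacing any realization $\bX$ by $\bX + t\bz$ with $t \in \fq$ leaves every observed symbol $\bX\bG[j]$ unchanged while moving $X_i$ through all of $\fq$; since $\bX$ is uniform, this coset-shift shows that, conditioned on the observations, $X_i$ remains uniform and hence independent of them, so the two entropies coincide.

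Next I would globalize and prune the quantifiers. By definition the scheme is weakly secure precisely when the equality above holds for all $i \in [k]$ and all $E$ with $0 < |E| < k$, which by the pointwise equivalence is the same as $\be_i \notin \spn\{\bG[j] : j \in E\}$ for all such $i$ and $E$. Because spans are monotone, $E' \seq E$ gives $\spn\{\bG[j] : j \in E'\} \seq \spn\{\bG[j] : j \in E\}$, and since $n \ge k$ every $E'$ with $|E'| < k$ extends to some $E$ with $|E| = k-1$; thus it suffices to test the sets $E$ of size exactly $k-1$.

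Finally I would read off the minimum distance. Fixing $E$ with $|E| = k-1$, the $k-1$ chosen columns span the code $\D_E = \spn\{\bG[j] : j \in E\} \seq \fq^k$ of length $k$ that they generate. Since every weight-one vector of $\fq^k$ is a nonzero scalar multiple of some $\be_i$ and $\D_E$ is linear, $\D_E$ has no weight-one codeword exactly when $\be_i \notin \D_E$ for all $i \in [k]$, which is the condition from the previous step; as minimum distance at least two means precisely the absence of a weight-one codeword, this yields the lemma. The step I expect to be the real obstacle is the ``if'' direction of the pointwise equivalence: one must argue genuine statistical independence via the conditional-uniformity/coset-shift argument (equivalently, that a uniform input pushed through a linear map is uniform on its image), rather than settle for the weaker fact that $X_i$ is merely not recoverable.
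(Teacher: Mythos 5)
Your proof is correct, and its combinatorial skeleton is the same as the paper's: both identify the obstruction to weak security as a weight-one codeword in the code of length $k$ spanned by some $k-1$ columns of $\bG$ (equivalently, some unit vector $\be_i$ lying in that span). Where you genuinely add something is in the information-theoretic step. The paper's own proof (Appendix~A) is deliberately informal: it argues that minimum distance at least two forces every nonzero linear combination $\boldsymbol{\alpha}\bG[E]^{\text{t}}$ to involve at least two source packets, so the adversary cannot \emph{linearly} extract any $x_i$, and then asserts that this is ``equivalent to'' the entropy condition, deferring the rigorous justification to an external reference (Lemma~3 of a companion paper). Your coset-shift argument supplies exactly that missing piece from first principles: from $\be_i \notin \spn\{\bG[j] : j \in E\}$ you produce, by duality, a vector $\bz$ annihilating all observed columns with $z_i \neq 0$, and the measure-preserving translations $\bX \mapsto \bX + t\bz$ show that the conditional distribution of $X_i$ given the observations is uniform on $\fq$, hence $\entropy(X_i \mid \{\bX\bG[j] : j \in E\}) = \entropy(X_i)$ exactly --- genuine statistical independence rather than mere non-recoverability by linear operations. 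You correctly flagged this as the crux; it is precisely the gap the paper papers over with a citation. Your explicit reduction of the quantifier over all $E$ with $0 < |E| < k$ to sets of size exactly $k-1$ via monotonicity of spans is also a point the paper leaves implicit (its proof only treats $|E| = k-1$). In short: same route, but your version is self-contained and rigorous where the published one is a sketch.
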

\vspace{-3pt}
\begin{proof}
This is a corollary of \cite[Lemma~3]{DauSongYuenISIT2014_2}. 
More details can be found in Appendix~\ref{app:a}. 
\end{proof}
\vspace{-3pt}
\begin{lemma}
\label{lem:distance2}
If the $\ell \times k$ matrix $\bA$ generates a $[k,\ell,d\geq 2]_q$ 
error-correcting code then 
\begin{equation}
\label{eq:ws_row}
|\cup_{j \in J} \supp(\bA_j)| \geq |J| + 1, \quad
\forall \varnothing \neq J \subseteq [\ell]. 
\end{equation}
\end{lemma}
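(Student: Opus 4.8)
The plan is to argue by contradiction, exploiting the fact that minimum distance at least two forbids weight-one codewords. First I would record two elementary facts, keeping in mind that (per the paper's convention) $\bA_j$ denotes the $j$th \emph{row} of $\bA$, so its support lies in $[k]$. Since $\bA$ is a generator matrix it has full row rank $\ell$, hence for any nonempty $J \subseteq [\ell]$ the rows $\{\bA_j : j \in J\}$ are linearly independent. Writing $S = \cup_{j \in J} \supp(\bA_j) \subseteq [k]$ for the union of their supports, the rows in $J$ all vanish on every coordinate outside $S$. Thus the $|J| \times |S|$ submatrix obtained by keeping the rows in $J$ and the columns in $S$ already has rank $|J|$, which yields the trivial bound $|S| \geq |J|$. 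The content of the lemma is to upgrade this to $|S| \geq |J|+1$.

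Next I would suppose, for contradiction, that $|S| = |J|$ for some nonempty $J$. Then the submatrix $\bB$ formed by rows $J$ and columns $S$ is a square $|J| \times |J|$ matrix of rank $|J|$, hence invertible over $\fq$. This is where the minimum-distance hypothesis enters: I want to manufacture a nonzero codeword of weight one. Any combination $\sum_{j \in J} c_j \bA_j$ is a codeword (a row-combination of the generator matrix), and since each $\bA_j$ with $j \in J$ vanishes outside $S$, the whole combination is supported inside $S$. Restricted to the coordinates in $S$, the map $\bc \mapsto \sum_{j \in J} c_j \bA_j$ is given by the invertible matrix $\bB$, hence is a bijection on $\fq^{|J|}$.

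Fixing a single coordinate $i \in S$, invertibility of $\bB$ lets me solve for a coefficient vector $\bc$ so that $\sum_{j \in J} c_j \bA_j$ equals $\be_i$ on $S$ and, being supported in $S$, is zero elsewhere. This codeword equals $\be_i \neq 0$ on $S$, so it is a nonzero codeword of Hamming weight exactly one, contradicting $d \geq 2$. Therefore $|S| = |J|$ is impossible and $|S| \geq |J| + 1$, as claimed. The step I expect to need the most care is precisely this passage from invertibility of $\bB$ to a genuine weight-one codeword: I must check both that the constructed combination is nonzero (guaranteed since its restriction to $S$ is $\be_i \neq 0$) and that its support collapses to the single coordinate $i$ (guaranteed since it is zero on $S \setminus \{i\}$ by construction and zero outside $S$ because the chosen rows have no support there).
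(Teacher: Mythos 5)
Your proof is correct, but it takes a genuinely different route from the paper's. Under the negation $|S|\leq|J|$ (which, combined with your rank lower bound, forces $|S|=|J|$), you exploit invertibility of the square submatrix $\bB$ on rows $J$ and columns $S$ to explicitly construct a nonzero codeword of Hamming weight one, contradicting $d\geq 2$ via the standard fact that minimum distance equals minimum nonzero weight for linear codes. The paper instead works with columns: it picks a set $I$ of $k-|J|$ coordinates outside $S$, extends it to a set $L$ of $k-1$ columns, and invokes Lemma~\ref{lem:GM_distance} (every $k-d+1$ columns of a generator matrix have full row rank) to get $\rank(\bA[L])=\ell$, which it then contradicts by a counting argument: the rows of $\bA[L]$ indexed by $J$ have at most $|J|-1$ nonzero columns, so $\rank(\bA[L])\leq(|J|-1)+(\ell-|J|)<\ell$. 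The two arguments are dual faces of the same phenomenon --- rank deficiency of $k-1$ columns is exactly the existence of a weight-one codeword --- but your version is more self-contained (it never needs Lemma~\ref{lem:GM_distance}, only the generator-matrix definition and the weight characterization of distance) and more constructive, since it exhibits the forbidden object directly; the paper's version fits the column-rank toolkit it reuses throughout (notably in the proof of Theorem~\ref{thm:main1}) and avoids having to argue that restriction to $S$ preserves linear independence of the rows, which is the one step in your proof requiring the care you correctly flagged.
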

\vspace{-3pt}
\begin{proof}
Suppose that $\bA$ generates a code of minimum distance at least two
but (\ref{eq:ws_row}) is violated.  
Then there exists $\varnothing \neq J \subseteq [\ell]$ such that \vspace{-3pt}
\begin{equation} 
\label{eq:1}
|\cup_{j \in J} \supp(\bA_j)| \leq |J|.  \vspace{-3pt}
\end{equation}
We aim to obtain a contradiction.

Let $I \subseteq [k] \setminus \cup_{j \in J} \supp(\bA_j)$
such that $|I| = k - |J|$. Moreover, let $L \subset [k]$
such that $L \supseteq I$ and $|L| = k-1$.
Let $\bA[L]$ be the $\ell \times (k-1)$ submatrix of $\bA$ that 
consists of columns of $\bA$ indexed by the elements in $L$.
Then according to Lemma~\ref{lem:GM_distance}, we have \vspace{-3pt}
\begin{equation} 
\label{eq:2}
\rank(\bA[L]) = \ell. \vspace{-3pt}
\end{equation}  
On the other hand, we claim that the $|J|$ rows of $\bA[L]$ indexed by
the elements in $J$ has rank at most $|J|-1$. 
As the remaining $\ell-|J|$ rows of $\bA[L]$ has rank at most $\ell-|J|$, 
we deduce that \vspace{-3pt}
\begin{equation} 
\label{eq:3}
\rank(\bA[L]) \leq (|J|-1) + (\ell-|J|) < \ell.
\end{equation} 
From (\ref{eq:2}) and (\ref{eq:3}) we obtain a contradiction.

We now prove that our aforementioned claim is correct. 
Consider the submatrix $\bA_J[L]$ that consists of 
rows  of $\bA[L]$ indexed by the elements of $J$. Due to (\ref{eq:1})
and our assumption that $L \supseteq I$, the submatrix 
$\bA_J[L]$ has at least $k - |J|$ all-zero columns. 
Since $|L| = k - 1$, $\bA_J[L]$ has $k-1$ columns.
Therefore, it has at most $|J|-1$ nonzero columns. 
Hence, $\rank(\bA_J[L]) \leq |J|-1$, as claimed.    
\end{proof}

\begin{remark}
\label{rm:1}
The result in Lemma~\ref{lem:distance2} can be extended
to $d \geq d'$ for any $d' \geq 1$ by replacing $|J|+1$ with $|J| + d'-1$ in (\ref{eq:ws_row}). 
\end{remark}

\begin{lemma}
\label{lem:var}
Let $\bP$ be a $(k-1)\times k$ $0$-$1$ matrix. 
Let $\vP$ be the matrix obtained from $\bP$ by replacing every nonzero entry of $\bP$ by some indeterminate $\xi_{i,j}$ over $\fq$. 
Suppose that all of these indeterminates are independent. 
Let $f(\vP) = \prod_{\bQ} \det(\bQ)$, where the product is taken over all $k$ submatrices $\bQ$ of order $k-1$ of $\vP$. 
Then $f(\vP)$, which is
a multivariable polynomial in $\fq[\cdots,\xi_{i,j},\cdots]$, is not identically
zero if and only if
\begin{equation} 
\label{eq:4}
|\cup_{j\in J} \supp(\bP_j)| \geq |J|+1,\quad
\forall \varnothing \neq J \subseteq [k-1]. 
\end{equation} 
\end{lemma}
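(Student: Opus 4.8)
The plan is to reduce the statement to Hall's marriage theorem via the standard ``generalized diagonal'' characterization of when the determinant of a matrix of independent indeterminates vanishes identically. Throughout, write $N(J) = \cup_{j\in J}\supp(\bP_j) \subseteq [k]$ for the set of columns met by the rows indexed by $J$, so that (\ref{eq:4}) reads $|N(J)| \ge |J|+1$ for every nonempty $J \subseteq [k-1]$.

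First I would note that the polynomial ring $\fq[\cdots,\xi_{i,j},\cdots]$ is an integral domain, so the product $f(\vP) = \prod_{\bQ}\det(\bQ)$ is not identically zero if and only if each factor $\det(\bQ)$ is not identically zero. Since $\vP$ has only $k-1$ rows, every submatrix $\bQ$ of order $k-1$ uses all rows and omits exactly one column; thus the $k$ factors are indexed by the omitted column $c \in [k]$, and I will denote by $\bQ_c$ the submatrix of $\vP$ obtained by deleting column $c$.

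The key step is the following: for the square matrix $\bQ_c$, whose nonzero entries are distinct independent indeterminates, the Leibniz expansion $\det(\bQ_c)=\sum_\sigma \sgn(\sigma)\prod_i (\bQ_c)_{i,\sigma(i)}$ contributes, for each permutation $\sigma$ whose diagonal avoids all forced zeros, a squarefree monomial $\pm\prod_i \xi_{\cdots}$ whose set of variables is exactly the graph $\{(i,\sigma(i))\}$ of $\sigma$. Since this assignment $\sigma \mapsto \{(i,\sigma(i))\}$ is injective, distinct permutations yield distinct monomials, so no cancellation can occur, and $\det(\bQ_c)\not\equiv 0$ if and only if $\bQ_c$ has a transversal avoiding all forced zeros, i.e.\ a perfect matching in the bipartite graph whose left vertices are the rows $[k-1]$, whose right vertices are the columns $[k]\setminus\{c\}$, and whose edges record the nonzero pattern. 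This is the one place where the independence hypothesis on the $\xi_{i,j}$ is essential, and I regard it as the main conceptual obstacle: note in particular that the argument survives in characteristic two, where every $\sgn(\sigma)$ equals $1$, precisely because the monomials are already pairwise distinct.

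Finally I would invoke Hall's marriage theorem: $\bQ_c$ admits such a perfect matching if and only if $|N(J)\setminus\{c\}|\ge |J|$ for every $J\subseteq[k-1]$. Combining over all $c$, we get that $f(\vP)\not\equiv 0$ is equivalent to $|N(J)\setminus\{c\}|\ge |J|$ holding for all $c\in[k]$ and all $J\subseteq[k-1]$, and it remains to check this family of inequalities is equivalent to (\ref{eq:4}). For the forward direction, given a nonempty $J$ the hypothesis forces $N(J)\neq\varnothing$ (else $0\ge|J|\ge1$), so choosing any $c\in N(J)$ gives $|N(J)|-1=|N(J)\setminus\{c\}|\ge|J|$, i.e.\ $|N(J)|\ge|J|+1$. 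Conversely, if (\ref{eq:4}) holds, then for every $c$ and every nonempty $J$ we have $|N(J)\setminus\{c\}|\ge|N(J)|-1\ge|J|$, while the case $J=\varnothing$ is trivial; hence all matching conditions hold. This closes the equivalence.
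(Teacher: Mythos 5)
Your proof is correct, and each step checks out: the factorization over the integral domain, the no-cancellation argument for symbolic determinants (distinct permutations yield distinct squarefree monomials, so $\det(\bQ_c)\not\equiv 0$ exactly when some generalized diagonal avoids the structural zeros), the translation via Hall's marriage theorem into the family of conditions $|N(J)\setminus\{c\}|\ge |J|$, and the final elementary equivalence of that family with (\ref{eq:4}). However, your route is genuinely different from the paper's. The paper disposes of Lemma~\ref{lem:var} in two lines by citing an external result of the same authors, \cite[Lemma 2-4]{DauSongDongYuenISIT2013}: for a $k\times n$ binary matrix $\bM$, the product of the determinants of all order-$k$ submatrices of $\vM$ is not identically zero if and only if $|\cup_{i\in I}\supp(\bM_i)|\ge n-k+|I|$ for all nonempty $I\subseteq[k]$; specializing to the $(k-1)\times k$ matrix $\bP$ turns $n-k+|I|$ into $|I|+1$, which is exactly (\ref{eq:4}). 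So the paper's proof buys brevity by reusing prior work, while yours buys self-containedness: it makes explicit where the algebraic independence of the $\xi_{i,j}$ enters, shows the argument is characteristic-free, and needs no external reference. It is also worth observing that the row-union condition in the cited lemma is itself a defect form of Hall's condition, so in substance you have reconstructed, in the special case needed here, the proof that the paper outsources to \cite{DauSongDongYuenISIT2013}.
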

\begin{proof}
The proof follows from \cite[Lemma 2-4]{DauSongDongYuenISIT2013}.
More details can be found in Appendix~\ref{app:b}.  
\end{proof}
\vskip 3pt

We are now in position to prove Theorem~\ref{thm:main1}. 
\begin{proof}[Proof of Theorem~\ref{thm:main1}]
\mbox{}\\
\textbf{Only-If.}
Suppose that there exists a weakly secure MDS coding scheme for 
an $(n,k)$-SMAN described by the adjacency matrix $\bM$. 
We aim to prove that the Weak Security Condition (\ref{eq:weak_security}) holds.
Let $\bG$ be the encoding matrix of the weakly secure MDS coding scheme. Note that as $\bG$ generates an MDS code, every subset of 
$k-1$ columns of $\bG$ is always linearly independent~\cite[Ch. 11]{MW_S}.  
Hence, by Lemma~\ref{lem:weak_security}, every set of $k-1$ columns of  
$\bG$ must generate a $[k,k-1,2]$ error-correcting code. 
Note here that $\supp(\bG[j]) \subseteq \supp(\bM[j])$ for all $j \in [n]$.
Hence, by applying Lemma~\ref{lem:distance2} to all $(k-1)\times k$
matrices corresponding to all subsets of $k-1$ columns of $\bG$, it 
is straightforward that the Weak Security Condition holds.

\nin\textbf{If.}
We assume that the Weak Security Condition holds, i.e.
\[
|\cup_{j \in J} \supp(\bM[j])| \geq |J|+1,\quad
\forall \varnothing \neq J \subset [n], \ |J| \leq k-1.
\] 
We aim to show that there exists a weakly secure MDS coding scheme
for the corresponding $(n,k)$-SMAN.

Using the same notation as in Lemma~\ref{lem:var}, let $\vM = (v_{i,j})$ where $v_{i,j} = 0$ if $m_{i,j} = 0$ and $v_{i,j} = \xi_{i,j}$
if $m_{i,j} \neq 0$. 
Here $\xi_{i,j}$'s are independent indeterminates.  
For each submatrix $\bP'$ of size $k \times (k-1)$ of $\bM$, let $\bP$
be its transpose and $\vP$ the corresponding (transposed) submatrix of $\vM$. We henceforth refer to such a matrix $\bP$ as a \emph{transposed submatrix} of $\bM$.  
Note that the Weak Security Condition (\ref{eq:weak_security}) on $\bM$  
implies the condition (\ref{eq:4}) on every transposed submatrix
$\bP$ of size $(k-1)\times k$ of $\bM$. Hence, by Lemma~\ref{lem:var}, the polynomial
$f(\vP)$ is not identically zero. Let
\[
F(\vM) = \prod_{\bP} f(\vP) \in \fq[\cdots \xi_{i,j} \cdots],
\]
where the product is taken over all transposed submatrices $\bP$ of size $(k-1) \times k$ of $\bM$. 
Then $F(\vM) \not\equiv \bO$. 

It is obvious that the Weak Security Condition (\ref{eq:weak_security}) implies the MDS Condition~\cite{DauSongYuenISIT14,DauSongYuen_JSAC2014, YanSprintsonZelenko2014}, which requires that every subset of $\ell$ relays must be collectively connected to at least $\ell$ sources, for all $\ell \leq k$.
Hence, if $f(\vM)$ is the product of determinants of all
submatrices of order $k$ of $\vM$ then $f(\vM) \not\equiv \bO$, 
according to~\cite[Lemma 2-4]{DauSongDongYuenISIT2013}. 
Therefore 
\[
F^{\text{ext}}(\vM) \define f(\vM)\times F(\vM) \not\equiv \bO.
\]
Hence, according to \cite[Lemma 4]{Ho2006},
for sufficiently large $q$, there exists 
$g_{i,j} \in \fq$ (for $(i,j)$ where $m_{i,j} = 1$) such that
\[
F^{\text{ext}}(\vM)(\cdots, g_{i,j}, \cdots) \neq 0.  \vspace{-3pt}
\]
As a consequence, 
\begin{equation} 
\label{eq:5}
f(\vP)(\cdots, g_{i,j}, \cdots) \neq 0,
\end{equation}  
for every transposed submatrix $\bP$ of size $(k-1)\times k$ 
of $\bM$. 
Let $\bG = (g_{i,j})$ (if $m_{i,j} = 0$ we set $g_{i,j} = 0$). 
Then thanks to (\ref{eq:5}), every transposed submatrix $\bA$ of size $(k-1)\times k$ of $\bG$ satisfies the following property: all submatrices of order $k-1$ of $\bA$ are invertible. 
Hence, according to \cite[Ch. 11]{MW_S}, every set of $k-1$ columns of $\bG$ generates an MDS $[k,k-1,2]_q$ error-correcting code. 
Thus, by Lemma~\ref{lem:weak_security}, the coding scheme based
on $\bG$ is weakly secure.    
Moreover, as $f(\vM)(\cdots, g_{i,j},\cdots) \neq 0$ as well, 
it follows that every submatrix of order $k$ of $\bG$ is invertible.
Thus, $\bG$ also generates an MDS code. 
\end{proof}
\vspace{-3pt}
\begin{remark}
Theorem~\ref{thm:main1} shows what the additional
cost is (in terms of source-relay links) when a passive adversary is also present, on top of an active adversary.
More specifically, while defending against an active adversary 
requires that every subset of $\ell$ relays must be collectively connected to at least $\ell$ sources, for all $\ell \leq k$, 
defending against both adversaries requires that every subset of $\ell$ relays must be collectively connected to at least $\ell+1$ sources, for all $0 < \ell < k$.
\end{remark} 

\subsection{Verification of Weak Security Condition in Polynomial Time}

While designing a weakly secure MDS coding scheme for a given $(n,k)$-SMAN
may require non-polynomial time (as random coding over finite fields with exponentially large sizes is used), verifying whether a SMAN supports a weakly
secure MDS coding scheme can be done in polynomial time. 
We prove this fact below using a proper modification of the proof of
\cite[Lemma 10]{DauSongYuen_JSAC2014}. 
We first present a simple lemma. Its proof is similar 
to the proof of~\cite[Lemma 4]{DauSongDongYuenISIT2013}
and can be found in Appendix~\ref{app:c}. 
\vspace{-3pt}
\begin{lemma}
\label{lem:ws_row}
The Weak Security Condition (\ref{eq:weak_security}) is equivalent to the following:
\begin{equation} 
\label{eq:ws_rowM}
|\cup_{i \in I} \supp(\bM_i)| \geq n - k + |I| + 1, \quad
\forall \varnothing \neq I \subsetneq [k]. 
\end{equation} 
\end{lemma}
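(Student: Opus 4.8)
The plan is to prove the equivalence by contraposition, exploiting the bipartite duality between the two neighbourhood operators. For $J \subseteq [n]$ I write $\N(J) \define \cup_{j \in J}\supp(\bM[j]) \subseteq [k]$ for the set of sources touching $J$, and for $I \subseteq [k]$ I write $\N(I) \define \cup_{i \in I}\supp(\bM_i) \subseteq [n]$ for the set of relays touching $I$. The argument will repeatedly use the self-evident fact that there is no edge between $I$ and $J$ iff $\N(J)\cap I = \varnothing$ iff $\N(I)\cap J = \varnothing$. The case $k=1$ is vacuous on both sides, so I assume $k \ge 2$. I then negate both conditions: the negation of~(\ref{eq:weak_security}) asserts the existence of $\varnothing \neq J \subseteq [n]$ with $|J| \le k-1$ and $|\N(J)| \le |J|$, while the negation of~(\ref{eq:ws_rowM}) asserts the existence of $\varnothing \neq I \subsetneq [k]$ with $|\N(I)| \le n-k+|I|$. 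It suffices to show these two negations are equivalent.

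For the forward direction I would take a violating $J$ and set $I \define [k]\setminus \N(J)$. Since $|\N(J)| \le |J| \le k-1$ we get $|I| \ge 1$, and because every source in $I$ avoids $J$ we have $\N(I) \subseteq [n]\setminus J$, hence $|\N(I)| \le n-|J|$. Using $|J| \ge |\N(J)| = k-|I|$ this yields $|\N(I)| \le n-k+|I|$, the required violation of~(\ref{eq:ws_rowM}), \emph{provided} $I$ is a proper subset of $[k]$, i.e. provided $\N(J) \neq \varnothing$. The degenerate case $\N(J) = \varnothing$ (an isolated relay) I would treat directly: then for $I = [k]\setminus\{i_0\}$ one still has $\N(I) \subseteq [n]\setminus J$, so $|\N(I)| \le n-1 < n = n-k+|I|+1$, again a violation.

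For the reverse direction I would take a violating $I$ and set $J \define [n]\setminus \N(I)$. By construction no source of $I$ touches $J$, so $\N(J) \subseteq [k]\setminus I$ and $|\N(J)| \le k-|I|$; moreover $|J| = n-|\N(I)| \ge k-|I| \ge 1$, so $J \neq \varnothing$. If $|J| \le k-1$ then $|\N(J)| \le k-|I| \le |J|$ gives the desired violation of~(\ref{eq:weak_security}) directly. If instead $|J| \ge k$, I would pass to any $J'' \subseteq J$ with $|J''| = k-1$; monotonicity of $\N$ gives $|\N(J'')| \le |\N(J)| \le k-|I| \le k-1 = |J''|$, so $J''$ is a violating set of admissible size.

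The main obstacle is not the core complementation identity, which is transparent, but the two boundary phenomena that force the extra case work. In the forward direction the complement $[k]\setminus\N(J)$ fills all of $[k]$ precisely when $J$ contains an isolated relay, so the natural candidate fails to be a proper subset and must be replaced. In the reverse direction the complement $[n]\setminus\N(I)$ can have size $\ge k$, placing it outside the index range of the Weak Security Condition and forcing a restriction to a $(k-1)$-element subset. Verifying that the size bounds $|\N(J)| = k-|I|$ and $|J| \ge k-|I|$ remain compatible after these adjustments is the crux of the bookkeeping.
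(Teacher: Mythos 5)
Your proof is correct and follows essentially the same route as the paper's: both argue by contraposition, converting a violating relay set $J$ into a violating source set $I$ (and vice versa) by complementation in the bipartite source--relay incidence structure. The only difference is bookkeeping --- the paper chooses the complement-side set to have size exactly $k-|J|$ (resp.\ $k-|I|$) rather than taking the full complement, which automatically keeps it a nonempty proper subset of admissible size and thereby avoids both of your boundary cases (the isolated-relay case and the truncation when $|J| \geq k$).
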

\begin{proposition}
\label{pro:verify_ws}
The Weak Security Condition (\ref{eq:weak_security}) can be verified
in polynomial time in $n$ and $k$. 
\end{proposition}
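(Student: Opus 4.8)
The plan is to verify the Weak Security Condition by recasting it as a collection of flow or matching feasibility checks on a suitable bipartite structure, rather than iterating over all exponentially many subsets $J$. By Lemma~\ref{lem:ws_row}, the condition is equivalent to the row-based inequality $|\cup_{i \in I} \supp(\bM_i)| \ge n - k + |I| + 1$ for every nonempty proper subset $I \subsetneq [k]$. The key observation I would exploit is that, as is standard for Hall-type deficiency conditions, it suffices to verify this inequality in its \emph{tightest} form for each fixed cardinality, or equivalently to check whether the minimum of $|\cup_{i \in I} \supp(\bM_i)| - |I|$ over nonempty $I \subsetneq [k]$ is at least $n-k+1$. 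The heart of the argument is that this minimum-deficiency quantity is computable in polynomial time via a max-flow computation on the bipartite graph between sources and relays.

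The steps I would carry out, in order, are as follows. First, I would translate the adjacency matrix $\bM$ into a bipartite graph $\cG$ with source vertices on one side and relay vertices on the other, with an edge $(i,j)$ precisely when $m_{i,j}=1$. Second, for the quantity $\min_{\varnothing \neq I \subsetneq [k]} \bigl(|\cup_{i\in I}\supp(\bM_i)| - |I|\bigr)$, I would set up a network with a super-source feeding into the $k$ source nodes and a super-sink drawing from the $n$ relay nodes, assigning infinite capacity to the source-relay edges and unit capacities appropriately so that minimum cuts in this network correspond to the neighbourhood-deficiency expression. By the max-flow min-cut theorem, the relevant minimum can be read off from a single max-flow computation, which runs in polynomial time in $n$ and $k$. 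Third, I would handle the constraint $I \subsetneq [k]$ (i.e.\ $I$ proper) by either excluding the full set $[k]$ directly or by observing that the full-set case reduces to the MDS Condition, which is separately and easily checkable. Finally, I would compare the computed minimum against the threshold $n-k+1$ to decide whether the Weak Security Condition holds, and conclude that the whole procedure is polynomial in $n$ and $k$.

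The main obstacle I anticipate is the faithful encoding of the constraint that $I$ range only over \emph{proper, nonempty} subsets, together with the ``$+1$'' slack in the inequality, into the capacities of the flow network so that the minimum cut captures exactly the quantity $\min_I\bigl(|\cup_{i\in I}\supp(\bM_i)| - |I|\bigr)$ and no spurious trivial subsets (such as $I = \varnothing$ or $I = [k]$) interfere. A careful choice of capacities or a small number of separate flow computations (for instance, one per excluded source, or a parametric family indexed by $|I|$) should resolve this, at the cost of only a polynomial blow-up in the number of flow computations. I would model this after the proof of~\cite[Lemma~10]{DauSongYuen_JSAC2014}, adapting its construction to the strengthened threshold $n-k+|I|+1$ that distinguishes weak security from the ordinary MDS Condition.
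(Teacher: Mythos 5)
Your proposal is correct and takes essentially the same route as the paper: reduce to the row form via Lemma~\ref{lem:ws_row}, then verify that form by max-flow/min-cut computations in a network built from the source--relay bipartite structure, explicitly modeled on \cite[Lemma 10]{DauSongYuen_JSAC2014}. The only caveat is that your opening claim that the constrained minimum deficiency can be read off a \emph{single} max-flow is too optimistic (the trivial subsets $I=\varnothing$ and $I=[k]$ spoil any naive super-source/super-sink encoding), but the fix you anticipate---a polynomial family of flow computations, one per excluded source, with the network forcing nonemptiness of the relevant index set---is exactly what the paper's construction $\mathcal{N}^k(\bM)$ achieves by deleting row $i_0$ and using $k-1$ coding, broadcast, and sink nodes with infinite-capacity edges so that every finite cut corresponds to a nonempty $I \subseteq [k]\setminus\{i_0\}$.
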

\vspace{-3pt}
\begin{proof}
By Lemma~\ref{lem:ws_row}, it suffices to prove that (\ref{eq:ws_rowM}) can be
verified in polynomial time for all $\varnothing \neq I \subseteq
[k] \setminus \{i_0\}$, for every $i_0 \in [k]$.
Without loss of generality, let $i_0 = k$.  
The other cases can be proved in the same manner. 
We associate with $\bM$ a network $\nm$ constructed as follows. 
The set of nodes of $\nm$ consists of
\begin{itemize}
	\item a source node $s$,
	\item $n$ \emph{packet} nodes $s_1, \ldots, s_n$, 
	\item $k-1$ \emph{coding} nodes $r_1, \ldots, r_{k-1}$, 
	\item $k-1$ \emph{broadcast} nodes $b_1, \ldots, b_{k-1}$, 
	\item $k-1$ sink nodes $t_1, \ldots, t_{k-1}$. 
\end{itemize}
To simplify the notation, set $R_i = \supp(\bM_i)$. 
The set of directed edges of $\nm$ consists of
\begin{itemize}
	\item one edge of capacity \emph{one} from $s$ to $s_i$, $\forall i \in [n]$, 
	\item one edge of capacity \emph{infinity} from $s_j$ to $r_i$ if $j \in R_i$, 
	\item one edge of capacity \emph{one} from $r_i$ to $b_i$, $\forall i \in [k-1]$,
	\item one edge of capacity \emph{infinity} from $r_i$ to $t_i$, $\forall i \in [k-1]$, 
	\item one edge of capacity \emph{infinity} from $b_i$ to $t_j$, $\forall i, j \in [k-1]$. 
\end{itemize}
For instance, for the $(6,4)$-SMAN in Fig.~\ref{fig:SMAN-ex},
the corresponding adjacency matrix is
\begin{equation} 
\label{eq:mat}
\bM = 
\begin{pmatrix}
1 & 1 & 1 & 0 & 0 & 0\\
1 & 0 & 0 & 1 & 1 & 0\\
0 & 1 & 0 & 1 & 0 & 1\\
0 & 0 & 1 & 0 & 1 & 1
\end{pmatrix}, 
\end{equation} 
and the corresponding network $\mathcal{N}^4(\bM)$ is depicted in Fig.~\ref{fig:nm}. 
\begin{figure}[ht]
\centering
\includegraphics[scale=0.8]{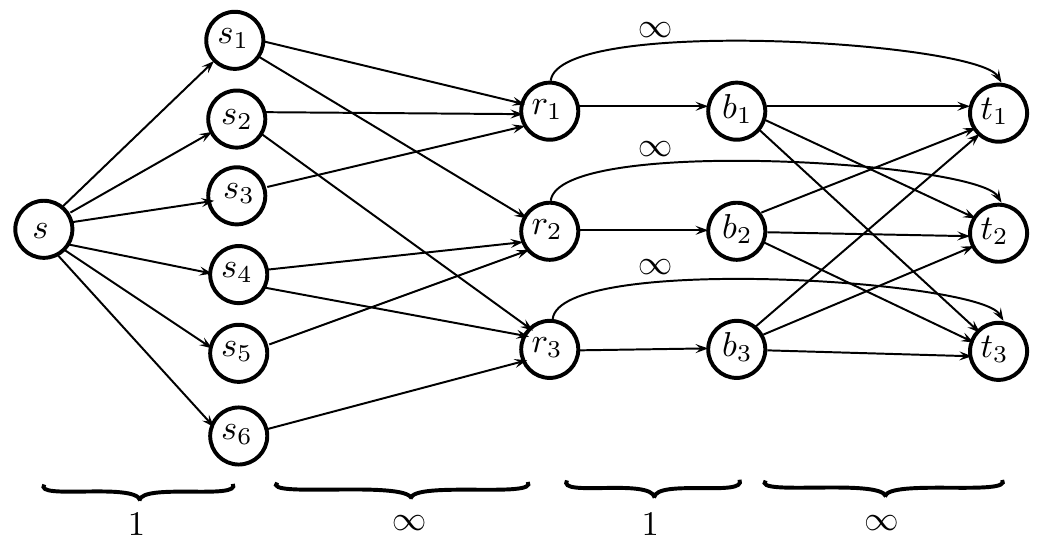}
\caption{The network $\mathcal{N}^4(\bM)$ associated to $\bM$ in (\ref{eq:mat})
when $i_0 = k = 4$.}
\label{fig:nm}
\vspace{-15pt}
\end{figure}

A cut $(S,T)$ of a network is a partition of the set of nodes of that network into 
two parts, namely $S$ and $T$. We are only interested in cuts that \emph{separate}
the source and some sink, i.e. $S$ contains the source and $T$ contains some sink. 
Let $c(u,v)$ denote the capacity of an edge $(u,v)$ in a network. Then the capacity 
of a cut $(S,T)$ is defined as
\[
c(S,T) = \sum_{u \in S, v\in T} c(u,v). 
\]
Consider the following \emph{Min-Cut} Condition for $\nm$: the capacity of every cut  
that separates $s$ and any sink is at least $n$. 
According to the Network Flow Algorithm (Ford-Fulkerson Algorithm), 
we can verify the Min-Cut Condition for $\nm$ in polynomial time.
Therefore, it suffices for our purpose to show that the condition 
(\ref{eq:ws_rowM}) restricted to those $I \subseteq [k-1]$ (for $\bM$)
is equivalent to the Min-Cut Condition (for $\nm$).  

Suppose that the Min-Cut Condition for $\nm$ holds. 
We aim to prove that (\ref{eq:ws_rowM}) restricted to those $I \subseteq [k-1]$ also holds for $\bM$. 
Let $I$ be an arbitrary nonempty subset of $[k-1]$. 
Recall that we use $R_i$ to denote $\supp(\bM_i)$.
Consider a cut $(S,T)$ where 
\[
\begin{split}
T = \{t_i\colon i \in I\} &\cup \{b_i\colon 1 \leq i \leq k-1\} \cup \{r_i\colon i \in I\}\\ 
&\cup \big\{s_j\colon j \in \cup_{i \in I} R_i \big\}.  
\end{split}
\]
Then the capacity of $(S,T)$ is 
\[
c(S,T) = \sum_{j \in \cup_{i \in I} R_i} c(s,s_j) + \sum_{i \notin I} c(r_i,b_i) = |\cup_{i \in I} R_i| + k - 1 - |I|.  
\]
As $c(S,T) \geq n$, we have 
\[
|\cup_{i \in I} R_i| \geq n - k + |I| + 1. 
\]
Conversely, suppose that (\ref{eq:ws_rowM}) restricted to those $I \subseteq [k-1]$ holds. 
We need to prove that $c(S,T) \geq n$ for every cut $(S,T)$ that separates $s$ and some sink. 
Suppose that $\{t_i\}_{i \in I'} \subseteq T$, where $\varnothing \neq I' \subseteq [k-1]$,
and that $t_i \notin T$ if $i \notin I'$. 
If $c(S,T) = \infty$ then it is larger than $n$ trivially. 
Now suppose that $c(S,T) < \infty$.  
Then $(S,T)$ does not contain any edge of the form $(s_j, r_i)$, $(r_i, t_i)$, or $(b_j, t_i)$, 
as these have capacity infinity.
Hence, $T$ must contain the following nodes 
\begin{itemize}
	\item $t_i$ for all $i \in I'$, because of our definition of $(S,T)$,
  \item $b_j$ for all $j \in [k-1]$, as $c(b_j,t_i) = \infty$ for every $j$ and $i$,
	\item $r_i$ for all $i \in I'$, as $c(r_i,t_i) = \infty$ for every $i$,
\end{itemize} 
Let $I$ be the subset of $[k-1]$ that satisfies  \vspace{-2pt}
\[
T \cap \{r_i\}_{i \in [k-1]} = \{r_i\}_{i \in I}. \vspace{-2pt}
\]
Then $I' \subseteq I$. Since $c(s_j,r_i) = \infty$ when $j \in R_i$, the set $T$ 
must also contains the packet nodes $s_j$ if $j \in R_i$ for some $i \in I$. 
Therefore, \vspace{-2pt}
\[
\begin{split} 
c(S,T) &\geq \sum_{j \in \cup_{i \in I} R_i} c(s,s_j) + \sum_{i \notin I} c(r_i,b_i)\\
&=|\cup_{i \in I} R_i| + k - 1 - |I| \\
&\geq (n - k + |I| + 1) + k - 1 - |I| = n.
\end{split} \vspace{-2pt}
\]
We complete the proof.
\end{proof}

\subsection{Trimming SMAN While Preserving Weak Security in Polynomial Time}

Given an $(n,k)$-SMAN that satisfies the Weak Security Condition 
(\ref{eq:weak_security}), 
Theorem~\ref{thm:trimming} states that one can trim the network to obtain
a sparsest possible network where the Weak Security Condition is 
still satisfied. Moreover, the trimming process can be done in polynomial time in $n$ and $k$. 
We prove this theorem in Appendix~\ref{app:d}. 
Note that we use here the equivalent statement of 
the Weak Security Condition stated in Lemma~\ref{lem:ws_row}.

\begin{theorem}
\label{thm:trimming}
For each $i \in [k]$ let $R_i$ be an arbitrary subset of $[n]$ $(n \geq k)$. Suppose that \vspace{-2pt}
\begin{equation}
\label{eq:7}
|\cup_{i \in I} R_i | \geq n - k + |I| + 1,\quad \forall
\varnothing \neq I \subsetneq [k].  \vspace{-2pt}
\end{equation}
Then for every $i \in [k]$ there exists
a subset $R'_i \subseteq R_i$ such that
\begin{itemize}
	\item $|\cup_{i \in I} R'_i | \geq n - k + |I| + 1$,
	$\forall \varnothing \neq I \subsetneq [k]$,
	\item $|R'_i| = n - k + 2$, for all $i \in [k]$. 
\end{itemize}
Moreover, such subsets $R'_i$ can be found in polynomial time.
\end{theorem}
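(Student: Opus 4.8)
The plan is to trim the network \emph{greedily}, deleting one source--relay link at a time while never violating~(\ref{eq:7}), and then to argue that the process must halt at a configuration in which every row attains the minimum possible size. Note first that the case $|I|=1$ of~(\ref{eq:7}) already forces $|R_i|\ge n-k+2$ for every $i$, so $n-k+2$ is the smallest a row can be while~(\ref{eq:7}) still holds; a configuration with all $|R'_i|=n-k+2$ is therefore genuinely sparsest. I would maintain~(\ref{eq:7}) as a loop invariant: starting from $R'_i=R_i$, repeatedly pick a row with $|R'_i|>n-k+2$ and delete from it a link whose removal preserves~(\ref{eq:7}); once no row exceeds $n-k+2$, the singleton case of the invariant forces every row to equal $n-k+2$, and we stop. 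Each deletion strictly decreases $\sum_i|R'_i|\le nk$, so at most $nk$ deletions occur; and by Lemma~\ref{lem:ws_row} together with Proposition~\ref{pro:verify_ws}, the validity of~(\ref{eq:7}) after any tentative deletion is checkable by a single max-flow computation, so scanning the $\le n$ candidate links in a row costs polynomial time. Hence the procedure runs in polynomial time, \emph{provided} a deletable link always exists while some row is oversized.

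That proviso is the heart of the matter, which I would isolate as the key lemma: if~(\ref{eq:7}) holds and $|R_{i_0}|\ge n-k+3$ for some $i_0$, then some $x\in R_{i_0}$ leaves~(\ref{eq:7}) intact after being deleted from $R_{i_0}$. To analyze it, set $g(I)=|\cup_{i\in I}R_i|-(n-k+|I|+1)$, so that~(\ref{eq:7}) reads $g(I)\ge0$ on every nonempty proper $I$; call $I$ \emph{tight} when $g(I)=0$. Deleting $x$ from $R_{i_0}$ can decrease $g(I)$ only for sets $I\ni i_0$ in which $x$ is covered solely by $R_{i_0}$, and then by exactly one; hence $x$ is deletable precisely when $x\in\cup_{i\in I\setminus\{i_0\}}R_i$ for every tight $I\ni i_0$. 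A single tight $I$ (necessarily with $|I|\ge2$, since $|R_{i_0}|\ge n-k+3$ prevents $\{i_0\}$ from being tight) can ``block'' at most one element, because $|\cup_{i\in I}R_i|-|\cup_{i\in I\setminus\{i_0\}}R_i|\le1$ by tightness of $I$ and~(\ref{eq:7}) applied to $I\setminus\{i_0\}$. It then remains to show that the tight sets through $i_0$ cannot jointly block all of $R_{i_0}$.

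The tool I would use is submodular \emph{uncrossing}. The coverage function $I\mapsto|\cup_{i\in I}R_i|$ is submodular and $|I|$ is modular, so $g$ is submodular, and for tight $A,B$ with $A\cap B\neq\varnothing$ one has $0=g(A)+g(B)\ge g(A\cup B)+g(A\cap B)$. If $x_a\neq x_b$ are two distinct blocked elements with witnessing tight sets $I_a,I_b$, then $x_a,x_b\notin\cup_{i\in(I_a\cap I_b)\setminus\{i_0\}}R_i$, so $I_a\cap I_b$ \emph{were it tight} would block both, contradicting the ``at most one'' bound; hence $g(I_a\cap I_b)\ge1$, which by the displayed inequality forces $g(I_a\cup I_b)\le-1$, i.e.\ $I_a\cup I_b=[k]$ (the only set containing $\{i_0\}$ on which $g$ can be negative). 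Consequently the complements $C_j=[k]\setminus I_j$ of the witnessing sets are pairwise disjoint nonempty subsets of $[k]\setminus\{i_0\}$.

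I expect \textbf{this final counting step to be the main obstacle}. One must convert the disjointness of the $C_j$, together with~(\ref{eq:7}) applied to $[k]\setminus\cup_j C_j$ and to $\cup_j C_j$, into a bound on the number of blocked elements that is strictly smaller than $|R_{i_0}|$. The delicate point is the interaction with the \emph{excluded} full index set $[k]$: because $g([k])<0$ automatically, the usual ``union of tight sets is tight'' argument degenerates exactly when the witnessing sets sweep out all of $[k]$, and controlling that degeneracy --- rather than the greedy scaffolding or the polynomial-time bookkeeping --- is where the real work of the proof lies.
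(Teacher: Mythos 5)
Your scaffolding is fine and is in fact the paper's own strategy: greedy deletion with (\ref{eq:7}) as a loop invariant, termination because no row can drop below $n-k+2$ (the singleton case of the invariant), and polynomial-time bookkeeping via Lemma~\ref{lem:ws_row} and Proposition~\ref{pro:verify_ws}. Your uncrossing analysis is also correct as far as it goes: each tight set blocks at most one element of $R_{i_0}$, witnessing tight sets of two distinct blocked elements must have union $[k]$, hence their complements are pairwise disjoint nonempty subsets of $[k]\setminus\{i_0\}$. But the proposal stops at exactly the step it needs: the key lemma (an oversized row always contains a deletable element) is never proved. That gap is genuine, and it cannot be closed, because the lemma is false --- and so is Theorem~\ref{thm:trimming} itself. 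Take $n=k=5$ and
\[
R_1=\{1,4\},\quad R_2=\{2,4\},\quad R_3=\{3,5\},\quad R_4=\{4,5\},\quad R_5=\{1,2,3\},
\]
so that each $x\in\{1,2,3\}$ lies in no $R_i$ other than $R_x$ and $R_5$. A finite check shows (\ref{eq:7}) holds, and $|R_5|=3=n-k+3$. Yet no element of $R_5$ is deletable: removing $x\in\{1,2,3\}$ makes the proper set $I=[5]\setminus\{x\}$ violate (\ref{eq:7}), because after the removal $x$ is covered by no $R_i$ with $i\in I$, so the union has size $4<n-k+|I|+1=5$. Indeed no element of any row is deletable (rows $1$--$4$ are at the floor $n-k+2$), so this is a stuck configuration with an oversized row, contradicting the paper's claim that stuck configurations have all rows of size $n-k+2$. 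Worse, \emph{any} family $R'_i\subseteq R_i$ satisfying (\ref{eq:7}) must have $\{1,2,3\}\subseteq R'_5$, by the same three sets $I=[5]\setminus\{x\}$; hence the conclusion $|R'_5|=n-k+2=2$ is unattainable and the theorem fails. This instance realizes precisely the degeneracy you flagged: the witnessing tight sets are $I_x=[5]\setminus\{x\}$, with pairwise unions equal to $[k]$ and pairwise disjoint complements $\{1\},\{2\},\{3\}$.

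You should also know that the paper's own proof breaks at exactly this point, so your instinct about where the real work lies was correct --- it is just that the work cannot be done. In Appendix~\ref{app:d}, after extracting the blocking sets $A$ and $B$ of (\ref{eq:old4})--(\ref{eq:old5}), the paper asserts in (\ref{eq:old14}) that $|R_{A\cup B\cup\{r\}}|\ge n-k+|A\cup B\cup\{r\}|+1$ by applying (\ref{eq:7}) to $A\cup B\cup\{r\}$. But (\ref{eq:7}) quantifies only over \emph{proper} subsets of $[k]$, and nothing prevents $A\cup B\cup\{r\}=[k]$, in which case the asserted bound reads $|R_{[k]}|\ge n+1$, absurd since $R_{[k]}\subseteq[n]$. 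In the instance above this degenerate case is forced: for $r=5$, $a=1$, $b=2$, the unique violated sets are $\{2,3,4,5\}$ and $\{1,3,4,5\}$, so $A=\{2,3,4\}$, $B=\{1,3,4\}$, and $A\cup B\cup\{r\}=[5]$. One positive by-product of your analysis is worth recording: if all $|R_{i_0}|\ge n-k+3$ elements of an oversized row were blocked, your disjoint-complement argument would require $n-k+3$ pairwise disjoint nonempty subsets of $[k]\setminus\{i_0\}$, which forces $n\le 2k-4$. Hence for $n\ge 2k-3$ your key lemma is true and your greedy proof goes through verbatim; the theorem as stated, however, is false in the regime $k\le n\le 2k-4$, and any repair must either restrict $n$ or weaken the target row sizes to those actually forced by the tight sets.
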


\section{Extension to block security}
\label{sec:block_security}

In this section we extend our result on weak security to a more general concept of \emph{block security} (or \emph{security against guessing} in some other works in the network coding literature). 

The coding scheme for an $(n,k)$-SMAN based on a $k \times n$ encoding matrix $\bG$ is \emph{$b_\ell$-block secure} against a passive 
adversary of \emph{strength} $\ell$ $(\ell < k)$ if the conditional entropy \vspace{-2pt}
\[
\entropy\big(\{X_j: j \in B\} \mid \{\bX \bG[j] : j \in E\}\big) = \entropy(\{X_j: j \in B\}), \vspace{-2pt}
\] 
for every subset $B \subset [k]$, $|B| \leq b_\ell$, and for every subset $E \subset [n]$, $|E| \leq \ell$.
In words, a coding scheme is $b_\ell$-block secure against an 
adversary of strength $\ell$ if an adversary that eavesdrops on
at most $\ell$ relay-sink packets obtains no information about
each subset of at most $b_\ell$ source packets. 
In that case, even if the adversary can guess correctly some
$b_\ell-1$ source packets, it still gains no information about any other packet. 

\begin{theorem}
\label{thm:main2}
An $(n,k)$-SMAN supports an MDS coding scheme that is $b_\ell$-block
secure against an adversary of strength $\ell$ if and only if
\begin{equation} 
\label{eq:block}
|\cup_{j \in J}\supp(\bM[j])| \geq |J| + b_\ell,\
\forall \varnothing \neq J \subset [n],\ |J| \leq \ell,
\end{equation} 
where $\bM[j]$ denotes the $j$th column of the adjacency matrix
$\bM$. Note that in the right-hand side of (\ref{eq:block}), 
the first term $|J|$ corresponds to the MDS Condition, while the second term $b_\ell$ corresponds to the block security level. 
\end{theorem}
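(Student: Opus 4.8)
The plan is to transcribe the proof of Theorem~\ref{thm:main1} almost verbatim, performing the substitutions $2 \mapsto b_\ell+1$ on the minimum-distance threshold and $k-1 \mapsto \ell$ on the size of the eavesdropped set. The engine will be a block-security analogue of Lemma~\ref{lem:weak_security}: writing $\cC_E$ for the length-$k$ code generated by the $\ell \times k$ matrix $\bG[E]^\top$, whose rows are the transposed columns $\bG[j]^\top$ for $j \in E$, I claim that the scheme based on $\bG$ is $b_\ell$-block secure against strength $\ell$ if and only if $\dist(\cC_E) \geq b_\ell+1$ for every $E$ with $|E| = \ell$. This is again a corollary of \cite[Lemma 3]{DauSongYuenISIT2014_2}: since the $X_i$ are i.i.d.\ uniform, the block $\{X_j : j \in B\}$ is independent of $\{\bX\bG[j] : j \in E\}$ exactly when stacking the $|B|$ coordinate rows on top of $\bG[E]^\top$ does not increase the rank beyond $\rank(\bG[E]^\top)+|B|$, i.e.\ when deleting from $\bG[E]^\top$ the $|B|$ columns indexed by $B$ leaves full row rank $\ell$. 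By Lemma~\ref{lem:GM_distance} this happens for every $B$ with $|B| \leq b_\ell$ precisely when $\dist(\cC_E) \geq b_\ell + 1$. Testing $|E| = \ell$ suffices, because $\cC_{E'} \subseteq \cC_E$ for $E' \subseteq E$ forces $\dist(\cC_{E'}) \geq \dist(\cC_E)$.

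For the \textbf{only-if} direction I would argue exactly as before. Given an MDS, $b_\ell$-block-secure $\bG$, each $E$ with $|E|=\ell$ yields $\dist(\cC_E) \geq b_\ell+1$. Applying the extension of Lemma~\ref{lem:distance2} recorded in Remark~\ref{rm:1}, with $d' = b_\ell+1$, to the generator $\bG[E]^\top$ gives $|\cup_{j \in E'}\supp(\bG[j])| \geq |E'| + b_\ell$ for every $E' \subseteq E$. Since $\supp(\bG[j]) \subseteq \supp(\bM[j])$ and every $J$ with $|J| \leq \ell$ lies inside some such $E$, the inequality (\ref{eq:block}) follows for $\bM$.

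The \textbf{if} direction is where the work lies, and I would reuse the random-coding/Schwartz--Zippel template. I introduce the indeterminate matrix $\vM$ supported on the $1$-entries of $\bM$. For each $E$ with $|E| = \ell$ and each $(k-b_\ell)$-subset $S \subseteq [k]$, I read (\ref{eq:block}) through Hall's theorem on the bipartite graph with left vertices $E$ and right vertices $S$ and edge $(j,i)$ whenever $i \in \supp(\bM[j]) \cap S$: the worst choice of $S$ removes at most $b_\ell$ sources from $\cup_{j \in E'}\supp(\bM[j])$, so an $E$-saturating matching exists for every such $S$ iff $|\cup_{j\in E'}\supp(\bM[j])| \geq |E'|+b_\ell$ for all $E' \subseteq E$, which is exactly (\ref{eq:block}). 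Each matching pins down an $\ell \times \ell$ minor of $\bG[E]^\top[S]$ whose diagonal monomial survives, hence a \emph{nonzero} polynomial in the $\xi_{i,j}$. Taking the product of one such minor over all pairs $(E,S)$, multiplied by the MDS polynomial $f(\vM)$ (the product of all order-$k$ minors, nonzero whenever the SMAN satisfies the MDS Condition, by \cite[Lemma 2-4]{DauSongDongYuenISIT2013}), produces $F^{\text{ext}}(\vM) \not\equiv \bO$. Then \cite[Lemma 4]{Ho2006} yields, over a large enough $\fq$, an evaluation $\bG = (g_{i,j})$ keeping $F^{\text{ext}}$ nonzero; by construction every $(k-b_\ell)$-subset of columns of each $\bG[E]^\top$ then has rank $\ell$, so $\dist(\cC_E) \geq b_\ell+1$, while every order-$k$ minor of $\bG$ is nonzero, so $\bG$ is simultaneously MDS and $b_\ell$-block secure.

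The hard part will be the generalization of Lemma~\ref{lem:var}, namely showing that (\ref{eq:block}) forces the product of chosen minors to be identically nonzero. Unlike the weak-security case $b_\ell = 1$, here each $(k-b_\ell)$-subset $S$ yields a \emph{rectangular} $\ell \times (k-b_\ell)$ block, so ``full rank'' is a disjunction over many $\ell\times\ell$ minors rather than a single determinant; the Hall matching is what isolates one surviving minor per pair $(E,S)$ whose leading monomial cannot cancel against the others in the product. A secondary but genuine point to handle is the MDS factor $f(\vM)$: condition (\ref{eq:block}) only constrains $|J| \leq \ell$, so it entails the full MDS Condition (and hence $f(\vM) \not\equiv \bO$) automatically only when $\ell + b_\ell \geq k$; for smaller strengths the statement must be read against the paper's standing restriction to MDS-admissible SMANs, since otherwise a source could be left uncovered while (\ref{eq:block}) still holds.
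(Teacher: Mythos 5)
The paper offers no proof to compare against: it declares that Theorem~\ref{thm:main2} ``follows the same idea'' as Theorem~\ref{thm:main1}, using the extension of Lemma~\ref{lem:distance2} recorded in Remark~\ref{rm:1}, and omits everything else. Your proposal is exactly that transcription, so your route coincides with the one the paper prescribes: a block-security analogue of Lemma~\ref{lem:weak_security} (minimum distance at least $b_\ell+1$ for the code spanned by any $\ell$ columns of $\bG$, obtained by the same rank/entropy accounting, with the reduction to $|E|=\ell$ via $\cC_{E'}\subseteq\cC_E$), Remark~\ref{rm:1} with $d'=b_\ell+1$ for the only-if direction, and the Schwartz--Zippel/random-evaluation template for the if direction. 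Where you go beyond the paper, you are right to do so: the generalization of Lemma~\ref{lem:var} is the real technical content. For $b_\ell=1$ every $(k-1)\times(k-1)$ minor of each transposed submatrix must be generically invertible, so one can simply multiply all of them; for general $b_\ell$, full rank of an $\ell\times(k-b_\ell)$ block is a disjunction over its $\ell\times\ell$ minors, some of which may be identically zero, and your device of selecting one non-vanishing minor per pair $(E,S)$ via a Hall matching is the correct repair. One small simplification: once each selected minor is a nonzero polynomial, their product is automatically nonzero because $\fq[\cdots,\xi_{i,j},\cdots]$ is an integral domain; no leading-monomial cancellation argument is needed.

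Your closing caveat points at a genuine defect of the theorem as stated, not of your proof. When $\ell+b_\ell<k$, condition (\ref{eq:block}) does not imply the MDS Condition: with $n=k=4$, $\ell=b_\ell=1$, connecting every relay to exactly the sources $s_1,s_2$ satisfies (\ref{eq:block}) yet the network supports no MDS code at all, so the ``if'' direction fails unless the MDS Condition is added as a standing hypothesis (equivalently, unless one quantifies only over SMANs already known to support MDS schemes), and the factor $f(\vM)$ in your product cannot otherwise be certified nonzero. The weak-security setting is precisely the boundary case $\ell+b_\ell=(k-1)+1=k$ at which (\ref{eq:block}) does imply the MDS Condition, which is why Theorem~\ref{thm:main1} needs no such proviso and why the paper's ``same idea'' claim silently breaks for smaller $\ell$.
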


Theorem~\ref{thm:main2} characterizes the block security level of 
an MDS coding scheme for SMAN based on the density of the source-relay
links. In a special case where the SMAN is densest, i.e. each source is connected to all relays, then according to~\cite{OliveiraLimaVinhozaBarrosMedard2012}, a Cauchy matrix would provide 
the best level of block security, which is $b_\ell = k - \ell$.  
The proof of Theorem~\ref{thm:main2} follows the same idea of
that of Theorem~\ref{thm:main1}, using a generalized
version of Lemma~\ref{lem:distance2} (see Remark~\ref{rm:1}).   
We omit the proof. 

While the Weak Security Condition can be verified in polynomial time, 
it is not known whether a similar conclusion holds for block security.
More specifically, given an $(n,k)$-SMAN and a sequence $\{b_\ell\}_1^{k-1}$, 
whether (\ref{eq:block}) can be verified in polynomial time is 
still an open question. 
 
\section*{Acknowledgment}
This work is funded by iTrust. 
\bibliographystyle{IEEEtran}
\bibliography{WeaklySecureECC}

\begin{thebibliography}{10}
\providecommand{\url}[1]{#1}
\csname url@samestyle\endcsname
\providecommand{\newblock}{\relax}
\providecommand{\bibinfo}[2]{#2}
\providecommand{\BIBentrySTDinterwordspacing}{\spaceskip=0pt\relax}
\providecommand{\BIBentryALTinterwordstretchfactor}{4}
\providecommand{\BIBentryALTinterwordspacing}{\spaceskip=\fontdimen2\font plus
\BIBentryALTinterwordstretchfactor\fontdimen3\font minus
  \fontdimen4\font\relax}
\providecommand{\BIBforeignlanguage}[2]{{%
\expandafter\ifx\csname l@#1\endcsname\relax
\typeout{** WARNING: IEEEtran.bst: No hyphenation pattern has been}%
\typeout{** loaded for the language `#1'. Using the pattern for}%
\typeout{** the default language instead.}%
\else
\language=\csname l@#1\endcsname
\fi
#2}}
\providecommand{\BIBdecl}{\relax}
\BIBdecl

\bibitem{YaoHoNita-Rotaru2011}
H.~Yao, T.~Ho, and C.~Nita-Rotaru, ``Key agreement for wireless networks in the
  presence of active adversaries,'' in \emph{Proc. IEEE Asilomar Conf. Signals,
  Sys. and Comp.}, 2011, pp. 792--796.

\bibitem{HalbawiHoYaoDuursma2013}
W.~Halbawi, T.~Ho, H.~Yao, and I.~Duursma, ``Distributed {R}eed-{S}olomon codes
  for simple multiple access networks,'' in \emph{Proc. IEEE Int. Symp. Inf.
  Theory (ISIT)}, 2014, pp. 651--655.

\bibitem{DauSongDongYuenISIT2013}
S.~H. Dau, W.~Song, Z.~Dong, and C.~Yuen, ``Balanced sparsest generator
  matrices for {MDS} codes,'' in \emph{Proc. IEEE Int. Symp. Inf. Theory
  (ISIT)}, 2013, pp. 1889--1893.

\bibitem{DauSongYuenISIT14}
S.~H. Dau, W.~Song, and C.~Yuen, ``On the existence of {MDS} codes over small
  fields with constrained generator matrices,'' in \emph{Proc. IEEE Int. Symp.
  Inf. Theory (ISIT)}, 2014, pp. 1787--1791.

\bibitem{DauSongYuen_JSAC2014}
------, ``On simple multiple access networks,'' \emph{IEEE J. Sel. Areas
  Commun. (JSAC)}, 2014, {DOI}: 10.1109/JSAC.2014.2384295, to appear.

\bibitem{YanSprintson2013}
M.~Yan and A.~Sprintson, ``Algorithms for weakly secure data exchange,'' in
  \emph{Proc. Int. Symp. on Network Coding (NetCod)}, 2013, pp. 1--6.

\bibitem{YanSprintsonZelenko2014}
M.~Yan, A.~Sprintson, and I.~Zelenko, ``Weakly secure data exchange with
  generalized {R}eed-{S}olomon codes,'' in \emph{Proc. IEEE Int. Symp. Inf.
  Theory (ISIT)}, 2014, pp. 1366--1370.

\bibitem{Dikaliotis-etal-2011}
T.~K. Dikaliotis, T.~Ho, S.~Jaggi, S.~Vyetrenko, H.~Yao, M.~Effros, J.~Kliewer,
  and E.~Erez, ``Multiple-access network information-flow and correction
  codes,'' \emph{IEEE Trans. Inf. Theory}, vol.~57, no.~2, pp. 1067--1079,
  2011.

\bibitem{Yamamoto1986}
H.~Yamamoto, ``Secret sharing system using (k, {L}, n) threshold scheme,''
  \emph{Electronics and Communications in Japan (Part I: Communications)},
  vol.~69, no.~9, pp. 46--54, 1986.

\bibitem{BhattadNarayanan2005}
K.~Bhattad and K.~R. Narayanan, ``Weakly secure network coding,'' in
  \emph{Proc. 1st Workshop on Network Coding, Theory, and Application
  (NetCod)}, 2005.

\bibitem{MW_S}
F.~J. MacWilliams and N.~J.~A. Sloane, \emph{The Theory of Error-Correcting
  Codes}.\hskip 1em plus 0.5em minus 0.4em\relax Amsterdam: North-Holland,
  1977.

\bibitem{VladutNoginTsfasman2007}
S.~Vladut, D.~Nogin, and M.~Tsfasman, \emph{Algebraic Geometric Codes: Basic
  Notions}.\hskip 1em plus 0.5em minus 0.4em\relax Boston, MA, USA: American
  Mathematical Society, 2007.

\bibitem{DauSongYuenISIT2014_2}
S.~H. Dau, W.~Song, and C.~Yuen, ``On block security of regenerating codes at
  the {MBR} point for distributed storage systems,'' in \emph{Proc. Int. IEEE
  Symp. Inf. Theory (ISIT)}, 2014, pp. 1967--1971.

\bibitem{Ho2006}
T.~Ho, M.~M\'{e}dard, R.~Koetter, D.~R. Karger, M.~Effros, J.~Shi, and
  B.~Leong, ``A random linear network coding approach to multicast,''
  \emph{IEEE Trans. Inf. Theory}, vol.~52, no.~10, 4413--4430.

\bibitem{OliveiraLimaVinhozaBarrosMedard2012}
P.~F. Oliveira, L.~Lima, T.~T.~V. Vinhoza, J.~Barros, and M.~M\'{e}dard,
  ``Coding for trusted storage in untrusted networks,'' \emph{IEEE Trans. Inf.
  Forensics Security}, vol.~7, no.~6, pp. 1890--1899, 2012.

\end{thebibliography}

\appendix

\subsection{Proof of Lemma~\ref{lem:weak_security}}
\label{app:a}

This lemma is a corollary of \cite[Lemma~3]{DauSongYuenISIT2014_2}.
We refer the reader to \cite{DauSongYuenISIT2014_2} and its extended
version for a detailed and rigorous proof. 
Nevertheless, we provide here an informal proof that may illustrate better the intuition behind this lemma.
Below we refer to the number of nonzero coordinates of a vector
as its (Hamming) \emph{weight}. It is well known in coding theory 
that the minimum Hamming weight of a nonzero codeword of any linear error-correcting code is equal to its minimum distance.   

Suppose that every $k-1$ columns of the coding matrix $\bG$ generates an error correcting codes of minimum distance $d \geq 2$. 
Suppose that a passive adversary obtains $\bx \bG[E]$, where $\bG[E]$ is the $k \times (k-1)$ submatrix of $\bG$ formed by columns indexed
by $E$, for some $E \subset [n]$, $|E|=k-1$.  
Hence, it can linearly transform these $k-1$ coded symbols
by considering the product 
\[
\bx\bG[E]\boldsymbol{\alpha}^\text{t} = \bx (\boldsymbol{\alpha}\bG[E]^\text{t})^\text{t}
=\bx \bc^\text{t},
\]
where $\boldsymbol{\alpha} \in \fq^{k-1}$ is some coefficient vector, the superscript ``t" denotes the transpose, and $\bc = \boldsymbol{\alpha}\bG[E]^\text{t}$. 
Since the columns of $\bG[E]$ generates an error-correcting codes of minimum distance at least two, $\bc$, if nonzero, has weight at least two. In other words, if $\bc \neq \bO$ then it 
has at least two nonzero coordinates. 
As a result, $\bx \bc^\text{t}$ is a linear combination of at least two source packets. 
Therefore, by linearly transforming the eavesdropped coded symbols 
$\bx\bG[E]$, the adversary cannot determine explicitly each source packet. As the source packets are independent and uniformly 
randomly distributed over $\fq$, this is equivalent to saying that
the conditional entropy of each source packet remains the same
given the knowledge of $k-1$ coded packets.
Hence, the coding scheme is weakly secure.  

Conversely, if for some subset $E \subset [n]$, $|E| = k-1$, the columns of $\bG[E]$ generate a linear error-correcting code of minimum distance one, then there exists $\boldsymbol{\alpha}
\in \fq^{k-1}$ such that $\bc = \boldsymbol{\alpha}\bG[E]^\text{t}$
has weight one. Suppose that $c_i \neq 0$ and $c_j = 0$ if $j \neq i$. 
Then by post-multiplying $\bx\bG[E]$ by 
$\boldsymbol{\alpha}^\text{t}$, the adversary obtains the source packet
$x_i$ explicitly. Hence, in this case the coding scheme is not
weakly secure.  

\subsection{Proof of Lemma~\ref{lem:var}}
\label{app:b} 

This lemma is a corollary of \cite[Lemma 2-4]{DauSongDongYuenISIT2013}. Indeed, let $\bM$ be a $k \times n$ binary matrix. Then 
\cite[Lemma 2-4]{DauSongDongYuenISIT2013} conclude that
$f(\vM) \not\equiv \bO$ if and only if 
\[
|\cup_{i \in I} \supp(\bM_i)| \geq n - k + |I|, \quad 
\forall \varnothing \neq I \subseteq [k].
\] 
Applying this conclusion to the $(k-1) \times k$ matrix $\bP$
in Lemma~\ref{lem:var}, the proof follows. 

\subsection{Proof of Lemma~\ref{lem:ws_row}}
\label{app:c}

Suppose that (\ref{eq:weak_security}) does not hold, 
i.e. there exists $\varnothing \neq J \subset [n]$, 
$|J| \leq k-1$, such that
\begin{equation}
\label{eq:6}
|\cup_{j \in J} \supp(\bM[j])| \leq |J|.
\end{equation}
We aim to show that (\ref{eq:ws_rowM}) does not hold either. 
Indeed, from (\ref{eq:6}), let $I \subseteq [k] \setminus 
\cup_{j \in J} \supp(\bM[j])$ such that $|I| = k - |J|$. 
Because $1 \leq |J| \leq k-1$, we deduce that $\varnothing \neq I
\subsetneq [k]$. Moreover, due to (\ref{eq:6}) and our assumption that
$I \subseteq [k] \setminus \cup_{j \in J} \supp(\bM[j])$, we conclude
that
\[
|\cup_{i \in I} \supp(\bM_i)| \leq n - |J| = n - k + |I|. 
\] 
Hence, (\ref{eq:ws_rowM}) is violated. 

Conversely, we need to show that if (\ref{eq:ws_rowM}) does not hold then 
neither does (\ref{eq:weak_security}). The proof is completely 
similar and therefore is omitted. 

\subsection{Proof of Theorem~\ref{thm:trimming}}
\label{app:d}
We can prove this theorem by modifying the proof of
\cite[Theorem 2]{DauSongYuenISIT14} accordingly.
Both proofs follow the same idea of a well-known proof of Hall's marriage theorem: repeatedly removing the edges of the bipartite graph
until the graph becomes sparsest yet still satisfies the Hall's condition. 
To simplify the notation, for a set $I \subseteq [k]$ we use $R_I$ to denote the union $\cup_{i \in I}R_i$. 

Suppose that the sets $R_i$ satisfy (\ref{eq:7}). We keep removing the elements of these
sets while maintaining the Weak Security Condition (\ref{eq:7}). Assume that at some point, the removal
of any element in any set $R_i$ would make them violate (\ref{eq:7}). 
We prove that now the sets $R_i$ have cardinality precisely $n-k+2$, which concludes the
first part of the theorem. 

Suppose, for contradiction, that there exists $r \in [k]$ such that $|R_r| \geq n - k + 3$. 
Take $a$ and $b$ in $R_r$, $a \neq b$. For all $i \in [k]$, let 
\begin{equation} 
\label{eq:old2}
R^a_i = 
\begin{cases}
R_i \setminus \{a\},& \text{ if } i = r,\\
R_i,& \text{ otherwise,} 
\end{cases}
\end{equation} 
\begin{equation} 
\label{eq:old3}
R^b_i = 
\begin{cases}
R_i \setminus \{b\},& \text{ if } i = r,\\
R_i,& \text{ otherwise.} 
\end{cases}
\end{equation} 
According to our assumption, both of the two collections of sets $\{R^a_i\}_{i \in [k]}$
and $\{R^b_i\}_{i \in [k]}$ violate (\ref{eq:7}). 
Therefore, there exist two nonempty subsets $A \subseteq [k]$ and $B \subseteq [k]$, 
$r \notin A \cup B$, such that 
\begin{equation} 
\label{eq:old4}
|R^a_{A \cup \{r\}}| < n - k + |A| + 2,
\end{equation} 
\begin{equation} 
\label{eq:old5}
|R^b_{B \cup \{r\}}| < n - k + |B| + 2. 
\end{equation} 
Since $r \notin A$, by (\ref{eq:old2}) we have
\begin{equation}
\label{eq:old6}
|R^a_{A \cup \{r\}}| \geq |R^a_A| = |R_A| \geq n - k + |A| + 1.  
\end{equation} 
Similarly, since $r \notin B$, by (\ref{eq:old3}) we have
\begin{equation}
\label{eq:old7}
|R^b_{B \cup \{r\}}| \geq |R^b_B| = |R_B| \geq n - k + |B| + 1.  
\end{equation} 
From (\ref{eq:old4}) and (\ref{eq:old6}) we deduce that
\begin{equation}
\label{eq:old8}
|R^a_{A \cup \{r\}}| = |R^a_A| = |R_A| = n - k + |A| + 1.
\end{equation} 
Similarly, from (\ref{eq:old5}) and (\ref{eq:old7}) we have
\begin{equation}
\label{eq:old9}
|R^b_{B \cup \{r\}}| = |R^b_B| = |R_B| = n - k + |B| + 1.  
\end{equation} 
Therefore, 
\begin{equation} 
\label{eq:old10}
R^a_{A \cup \{r\}} \cap R^b_{B \cup \{r\}} = R_A \cap R_B.
\end{equation} 
Moreover, as $a \in R^b_{B \cup \{r\}}$ and $b \in R^a_{A \cup \{r\}}$, we deduce that 
\begin{equation} 
\label{eq:old11}
R^a_{A \cup \{r\}} \cup R^b_{B \cup \{r\}} = R_{A \cup B \cup \{r\}}. 
\end{equation} 
From (\ref{eq:old8}) and (\ref{eq:old9}) we have
\begin{equation} 
\label{eq:old13}
\begin{split}
&\quad\ 2(n-k) + |A| + |B| + 2\\
&= |R^a_{A \cup \{r\}}| + |R^b_{B \cup \{r\}}|\\
&= |R^a_{A \cup \{r\}} \cup R^b_{B \cup \{r\}}| + |R^a_{A \cup \{r\}} \cap R^b_{B \cup \{r\}}|\\ 
&= |R_{A \cup B \cup \{r\}}| + |R_A \cap R_B|,
\end{split} 
\end{equation}  
where the last transition is due to (\ref{eq:old10}) and (\ref{eq:old11}). 
We further evaluate the two terms of the last sum in (\ref{eq:old13}) as follows. 
The first term
\begin{equation} 
\label{eq:old14}
\begin{split} 
|R_{A \cup B \cup \{r\}}| &\geq n - k + |A \cup B \cup \{r\}| + 1\\
&= n - k + |A \cup B| + 2. 
\end{split} 
\end{equation} 
The second term
\begin{equation} 
\label{eq:old15}
|R_A \cap R_B| \geq n - k + |A \cap B| + 1,
\end{equation} 
which can be explained below. 
\begin{itemize}
  \item If $A \cap B \neq \varnothing$, then by applying (\ref{eq:7}) to $A \cap B$ we obtain
	\[
	|R_A \cap R_B| \geq |R_{A \cap B}| \geq n - k + |A\cap B| + 1. 
	\]
	\item If $A \cap B = \varnothing$, then $n - k + |A \cap B| + 1 = n - k + 1$.
We have
\begin{equation} 
\label{eq:old16}
R^a_{A \cup \{r\}} = R^a_A \cup R^a_r = R_A \cup (R_r \setminus \{a\}). 		
\end{equation} 
By (\ref{eq:old8}), $R^a_{A \cup \{r\}} = R_A$. Combining this with (\ref{eq:old16}) we deduce that
\begin{equation} 
\label{eq:old17}
R_r \setminus \{a\} \subseteq R_A. 
\end{equation}  
Similarly, 
\begin{equation} 
\label{eq:old18}
R_r \setminus \{b\} \subseteq R_B. 
\end{equation}  
From (\ref{eq:old17}) and (\ref{eq:old18}) we have
	\[
	|R_A \cap R_B| \geq |R_r \setminus \{a,b\}| \geq (n - k + 3) - 2 = n - k + 1,   
	\]
which proves that (\ref{eq:old15}) is correct when $A \cap B = \varnothing$. 
\end{itemize}
Finally, from (\ref{eq:old13}), (\ref{eq:old14}), and (\ref{eq:old15}) we deduce that
\[
\begin{split} 
&\quad \ 2(n-k) + |A| + |B| + 2\\
&\geq \big(n - k + |A \cup B| + 2\big) + \big(n - k + |A \cap B| + 1 \big)\\
&= 2(n-k) + |A| + |B| + 3,
\end{split} 
\]
which produces a contradiction. 

The proof of the first part of this theorem also provides a polynomial time
algorithm to find subsets of $R_i$'s that all have cardinality $n -k+2$ yet still
maintain the Weak Security Condition (\ref{eq:7}). 
Indeed, we keep removing the
elements of the subsets $R_i$ in the following way. If there exists $r \in [k]$
such that $|R_r| \geq n - k + 3$, then as we just prove, for $a, b \in R_r$, 
it is impossible that removing $a$ or $b$ from $R_r$ both render the Weak Security Condition
violated. Therefore, we can either remove $a$ or $b$ while still maintaining
the Weak Security Condition. Note that by Proposition~\ref{pro:verify_ws}, 
the Weak Security Condition can be verified in polynomial time in $k$ and $n$. 
Therefore, this algorithm terminates in polynomial time in $k$ and $n$ 
and produces subsets $R'_i$'s of the original sets $R_i$'s that satisfy the
stated requirement in the theorem.  
Note that by setting $I = \{i\}$, the Weak Security Condition (\ref{eq:7}) implies
that $|R'_i| \geq n - k + 2$. Hence, those $R'_i$'s form a sparsest
$(n,k)$-SMAN that still supports a weakly secure MDS coding scheme. 
\end{document}